\documentclass[runningheads]{llncs}

\usepackage[T1]{fontenc}
\usepackage[utf8]{inputenc}

\usepackage{amsmath}
\usepackage{amssymb}
\usepackage{amsfonts}
\usepackage{mathtools}
\usepackage{bm}

\usepackage{graphicx}
\usepackage{enumerate}
\usepackage[hyphens]{url}
\usepackage{hyperref} 
\usepackage{comment}
\usepackage{verbatim}

\usepackage{algorithm}
\usepackage{algorithmic}
\usepackage{hyperref}
\hypersetup{
    colorlinks=true,    
    linkcolor=blue,      
    citecolor=blue,       
    filecolor=magenta,    
    urlcolor=blue       
}

\spnewtheorem{assumption}{Assumption}{\bfseries}{\itshape}

\DeclareUnicodeCharacter{2212}{-}
\begin{document}
\title{A Mathematical Programming Approach to Computing and Learning Berk--Nash Equilibria in Infinite-Horizon MDPs}
\author{Quanyan Zhu \and Zhengye Han}
\authorrunning{Q. Zhu and Z. Han}
\institute{Department of Electrical and Computer Engineering, \\
New York University, Brooklyn, NY 11201, USA \\
\email{\{qz494,zh3286\}@nyu.edu}}

\maketitle              
\begin{abstract}
We study sequential decision-making when the agent’s internal model class is misspecified. 
Within the infinite-horizon Berk--Nash framework, stable behavior arises as a fixed point: the agent acts optimally relative to a subjective model, while that model is statistically consistent with the long-run data endogenously generated by the policy itself.  We provide a rigorous characterization of this equilibrium via coupled linear programs and a bilevel optimization formulation. 
To address the intrinsic non-smoothness of standard best-response correspondences, we introduce entropy regularization, establishing the existence of a unique soft Bellman fixed point and a smooth objective.  Exploiting this regularity, we develop an online learning scheme that casts model selection as an adversarial bandit problem using an EXP3-type update, augmented by a novel conjecture-set zooming mechanism that adaptively refines the parameter space.  Numerical results demonstrate effective exploration–exploitation trade-offs, convergence to the KL-minimizing model, and sublinear regret.
\keywords{Berk-Nash Equilibrium \and Nash Equilibrium \and Reinforcement Learning \and Model Misspecification \and Mathematical Programming.}
\end{abstract}

\section{Introduction}

Decision-making under uncertainty is fundamental to modern networked systems \cite{li2022confluence,zhu2025revisiting,zhu2011distributed}. In practice, agents rarely possess complete knowledge of their environments and instead rely on subjective parametric models to guide decision-making. Classical learning theory suggests that, given sufficient data, models converge to the ground truth. However, in many networked and strategic settings the environment is inherently nonstationary and uncertain, evolving in response to agents’ actions \cite{li2022role,li2025computational,li2025digital,zhao2023stackelberg}. As a result, it may be difficult, or impossible, to specify and learn the environment precisely.  Model-free Reinforcement Learning (RL) \cite{sutton1998reinforcement,zhu2012hybrid} and robust control \cite{bacsar2008h} offer powerful approaches to uncertainty, but they address different objectives. Model-free RL seeks policies that maximize expected rewards without explicitly estimating a model, typically under assumptions such as Markovian dynamics and stationarity. These assumptions are often violated in multi-agent environments characterized by strategic adaptation. Robust control, in contrast, designs policies that remain feasible under worst-case uncertainty sets, which can lead to conservative behavior and may not capture endogenous belief evolution.

We adopt a perspective centered not on the objective specification of the environment, but on the agent’s internal, or mental, model of it. Rather than assuming access to the true transition kernel, we consider an agent who is committed to a restricted, potentially misspecified parametric family of conjectured models. Such restrictions may arise from considerations of interpretability, computational tractability, structural simplicity, or prior domain knowledge \cite{lei2024adapt,li2025symbiotic}. 

Within this constrained model class, the agent updates parameters through interaction with the environment, selects policies that are optimal relative to the currently maintained model, and evaluates the adequacy of the model using the data endogenously generated by its own behavior. Beliefs shape actions, actions shape observations, and observations in turn update beliefs; the learning process is therefore intrinsically closed-loop \cite{li2024conjectural}. This leads to the following central question: if an agent is restricted to a given parametric family of mental models, behaves optimally relative to that model, and the model statistically fits the long-run data induced by the agent’s own policy, does there exist a fixed-point consistency between observations, the mental model, and the induced control policy? In other words, can belief, behavior, and endogenous data generation be mutually self-sustaining under model misspecification?

 \vspace{-4mm}
\subsubsection{Main Contribution of This Work} To systematically analyze this problem, we begin with a single-agent infinite-horizon Markov Decision Process (MDP), which serves as a conceptual foundation for more general multi-agent Markov games. Even in the single-agent case, the agent must construct and maintain an internal (mental) model of the environment. This model is updated through closed-loop interaction: the agent acts based on its current beliefs, the environment generates outcomes in response to those actions, and the resulting data informs subsequent model updates.
Although the environment is not a strategic player, it reacts to the agent’s behavior through the state-transition mechanism. The agent, in turn, best responds to the environment via its maintained model, potentially misspecified, and the policy derived from it. The interaction is therefore endogenous: beliefs shape actions, actions shape observations, and observations reshape beliefs.

Within this framework, a Berk–Nash equilibrium \cite{li2024conjectural,hammar2025adaptive,esponda2021equilibrium} is defined as a fixed point satisfying two conditions. First, subjective optimality: the agent’s policy is optimal given its maintained model. Second, statistical consistency: the maintained model minimizes the Kullback–Leibler (KL) divergence relative to the true environment under the stationary distribution induced by the agent’s own policy.
This formulation can be interpreted as a “game against nature,” and it provides a tractable proxy for localized learning in complex systems. It ensures mutual consistency between the agent’s beliefs and actions: even if the model offers a simplified representation of reality, the resulting strategy–belief pair is self-sustaining under the data generated by the agent’s behavior.

While the Berk–Nash framework provides a systematic equilibrium notion under subjective modeling, its computational aspects remain largely unexplored. Computing such equilibria is challenging because the best-response correspondence is typically non-unique and non-smooth, making fixed-point search unstable.
We emphasize that computation itself is part of the agent’s mental model, a computational mental model that determines how beliefs are represented, updated, and translated into policies. From this perspective, equilibrium must be algorithmically constructible within the agent’s representational and computational constraints.

To obtain a well-posed formulation, we introduce entropy regularization \cite{neu2017unified}, which smooths the policy landscape and guarantees uniqueness. This leads to a bilevel programming characterization: the lower level enforces subjective optimality via a linear programming (LP) representation of the MDP \cite{feinberg2012handbook,filar2012competitive}, while the upper level minimizes the long-run statistical divergence between the maintained model and the true environment. For online learning over finite but large parameter spaces, we propose an EXP3-type algorithm \cite{auer2002nonstochastic} combined with a Conjecture Set Zooming mechanism that prunes implausible models and refines promising regions. This operationalizes the computational mental model and enables approximation of the entropy-regularized Berk–Nash equilibrium with sublinear regret.

 \vspace{-4mm}
\subsubsection{Related Work}

Our work relates to the literature on equilibrium computation and learning in dynamic and stochastic games \cite{fudenberg1998theory}. 
Recent studies such as \cite{li2025computational,li2024conjectural} analyze conjectural or belief-based equilibria using iterative or algorithmic approaches. 
In contrast, we develop a mathematical programming framework that characterizes Berk--Nash equilibria \cite{esponda2016berk,esponda2021equilibrium} through coupled linear programs and bilevel optimization. 
This formulation not only yields computational procedures but also facilitates structural analysis, including convexity, regularity, dual representations, and fixed-point existence.

Berk--Nash equilibrium is closely connected to self-confirming equilibrium \cite{fudenberg1993self} and bounded rationality \cite{simon1955behavioral}. 
Like self-confirming equilibrium, it requires beliefs to be consistent with realized play rather than globally correct. 
However, it strengthens this notion by imposing Kullback--Leibler optimality within a parametric model class, thereby formalizing misspecification in the spirit of \cite{berk1966limiting}. 
The restriction to a model class captures cognitive or computational constraints while maintaining sequential rationality within that class.

Our entropy-regularized formulation connects to smooth best responses and logit equilibria \cite{mckelvey1995quantal}, as well as regularized dynamic programming \cite{geist2019theory}. 
Entropy regularization ensures uniqueness and smoothness of the best-response map, enabling tractable computation and online learning via adversarial bandits \cite{auer2002nonstochastic}.

 \vspace{-4mm}
\subsubsection{Organization of the Paper}

The remainder of the paper is organized as follows. 
Section~\ref{Sec:2} introduces the infinite-horizon Berk--Nash framework, defining subjective MDPs, KL-based statistical consistency, and the equilibrium concept. 
Section~\ref{Sec:3} presents the mathematical programming formulation, including the coupled linear programs and the joint feasibility characterization linking optimal control and belief consistency. 
Section~\ref{Sec:4} develops the entropy-regularized framework, proving existence and uniqueness of the soft Bellman fixed point and establishing smooth structural properties for computation. 
Sections~\ref{sec:exp3-case-study} and~\ref{sec:conjecture-update} describe the online learning algorithms and numerical experiments. 
Section~\ref{sec:conclusion} concludes the paper.
\section{Infinite-Horizon Berk--Nash Framework for a Finite Markov Decision Problem}
\label{Sec:2}

We present a formal framework in which a decision-maker interacts with an
unknown Markov environment while reasoning through a potentially misspecified
parametric model. The agent behaves optimally relative to her subjective mental
model, and this model must in turn be statistically consistent with the
long-run data generated by her own behavior. The fixed points of these two
requirements define an infinite-horizon Berk–Nash solution. This framework is
aligned with the formulation in \cite{esponda2021equilibrium}.

\subsection{True Markov Decision Process}

We consider a system with a \emph{finite state space} \(X = \{1,\dots,S\}\) and a \emph{finite action space} \(A = \{1,\dots,m\}\). A \emph{stationary randomized policy} is a mapping \(\pi : X \to \Delta(A)\), where \(\Delta(A)\) denotes the probability simplex on \(A\) (i.e., \(\pi(a\mid x)\ge 0\) and \(\sum_{a\in A}\pi(a\mid x)=1\)). The environment is characterized by the true transition kernel \(P(\cdot \mid x,a)\in\Delta(X)\) defined for all \((x,a)\in X\times A\), along with a one-period reward function \(r:X\times A \to \mathbb{R}\). Given a policy \(\pi\), the true controlled process evolves according to the dynamics \(a_t \sim \pi(\cdot \mid x_t)\) and \(x_{t+1} \sim P(\cdot \mid x_t,a_t)\) for \(t=0,1,\dots\). Finally, fixing a discount factor \(\beta\in(0,1)\), the infinite-horizon discounted value of policy \(\pi\) for an initial state \(x\) is given by \(V^{\pi}(x) = \mathbb{E}_{P}^{\pi} [ \sum_{t=0}^\infty \beta^t\, r(x_t,a_t) \mid x_0=x ]\).

\subsection{Subjective Parametric Models}

The decision-maker does not know the true kernel $P$. Instead, she reasons using a \emph{misspecified parametric family} of Markov kernels $\mathcal{Q} = \{ Q_\theta : \theta\in\Theta \}$, where each subjective kernel satisfies $Q_\theta(\cdot \mid x,a)\in\Delta(X)$. The parameter space $\Theta \subset \mathbb{R}^k$ is assumed to be a nonempty compact subset, and the true kernel $P$ need not belong to $\mathcal{Q}$. Given a parameter $\theta\in\Theta$ and a stationary policy $\pi$, the agent believes the process evolves according to the dynamics $a_t \sim \pi(\cdot\mid x_t)$ and $x_{t+1} \sim Q_\theta(\cdot \mid x_t,a_t)$. Expectations under this subjective model are denoted by $\mathbb{E}^{\pi}_{Q_\theta}[\cdot]$.

\begin{definition}[Subjective MDP]
Fix $\theta\in\Theta$.
The subjective Markov decision process is the tuple \( \mathcal{M}_\theta := (X,A,Q_\theta,r,\beta),\)
in which the agent evaluates policies using the discounted objective
\(V_\theta^\pi(x)
=
\mathbb{E}_{Q_\theta}^{\pi}
\!\left[
\sum_{t=0}^{\infty}
\beta^t r(x_t,a_t)
\;\middle|\; x_0=x
\right].\)
The associated optimal value function is \(V_\theta(x)
=
\sup_{\pi} V_\theta^\pi(x).\)
\end{definition}

\subsection{Subjective Optimality and the Best Response}

Because $X$ and $A$ are finite and $\beta\in(0,1)$, the subjective optimal value
function $V_\theta$ is the unique solution to the Bellman equation
\[
V_\theta(x)
=
\max_{a\in A}
\left\{
r(x,a)
+
\beta
\sum_{x'\in X}
Q_\theta(x' \mid x,a) V_\theta(x')
\right\},
\qquad x\in X.
\]

\begin{definition}[Subjectively Best-Response Policy]
Given $\theta\in\Theta$, a stationary policy $\pi$ is a \emph{best response} to
$\theta$ if for all $x\in X$,
\[
\pi(a\mid x)>0
\;\Rightarrow\;
a\in \arg\max_{b\in A}
\left\{
r(x,b)
+
\beta \sum_{x'} Q_\theta(x'\mid x,b)V_\theta(x')
\right\}.
\]
The \emph{best-response correspondence} is
\( BR(\theta)
=
\{\pi : \pi \text{ is a best response under } Q_\theta\}. \)
\end{definition}

Thus $BR(\theta)$ captures subjective sequential rationality: the agent acts
optimally relative to her own subjective model $Q_\theta$.

\subsection{Long-Run Statistical Consistency}

Let $\pi$ be any stationary policy.  
Under the true model $P$ and policy $\pi$, define the induced Markov kernel \(P_\pi(x' \mid x)
:=
\sum_{a\in A}\pi(a\mid x) P(x'\mid x,a),\)
and suppose (see Assumption~\ref{ass:ergodicity} below) that $P_\pi$ admits a unique
stationary distribution $\mu_\pi\in\Delta(X)$ satisfying \( \mu_\pi(x')
=
\sum_{x\in X}
\sum_{a\in A}
\mu_\pi(x)\,\pi(a\mid x)\,P(x'\mid x,a).\) 
The agent observes the long-run empirical frequencies of transitions under her
own behavior.  A subjective parameter $\theta$ is statistically consistent if it
best approximates the true transition kernel in the Kullback--Leibler sense.

\begin{definition}[Long-Run KL Divergence]
Given a policy $\pi$ and parameter $\theta$, the long-run KL divergence is
\[
D(\theta \mid \pi)
:=
\sum_{x\in X}
\sum_{a\in A}
\mu_\pi(x)\,\pi(a\mid x)\;
D_{\mathrm{KL}}
\!\left(
P(\cdot\mid x,a)
\;\Vert\;
Q_\theta(\cdot\mid x,a)
\right),
\]
where, for $\nu,\mu\in\Delta(X)$ with $\nu\ll \mu$, \( D_{\mathrm{KL}}(\nu\Vert\mu)
=
\sum_{x'\in X}
\nu(x')\log\frac{\nu(x')}{\mu(x')}. \)
\end{definition}

\begin{definition}[Pseudo-True Parameter Set]
For a stationary policy $\pi$, the set of Kullback--Leibler minimizing parameters is \(\Theta^\ast(\pi)
:=
\arg\min_{\theta\in\Theta}
D(\theta\mid \pi).\)
Elements of $\Theta^\ast(\pi)$ are called \emph{pseudo-true parameters} associated with $\pi$.
\end{definition}

\subsection{Infinite-Horizon Berk--Nash Solution}

\begin{definition}[Infinite-Horizon Berk--Nash Solution]
A pair $(\pi^\ast,\theta^\ast)$ consisting of a stationary policy
$\pi^\ast\in\Delta(A)^X$ and a parameter $\theta^\ast\in\Theta$ is an
\emph{infinite-horizon Berk--Nash solution (BN solution)} if:
\begin{enumerate}
\item[(i)] \textbf{Subjective Optimality}:
\(
\pi^\ast \in BR(\theta^\ast),
\)
that is, $\pi^\ast$ is a subjectively optimal policy for the subjective MDP
$\mathcal{M}_{\theta^\ast}$.

\item[(ii)] \textbf{Statistical Consistency}:
\(
\theta^\ast \in \Theta^\ast(\pi^\ast),
\)
meaning that $\theta^\ast$ minimizes the long-run KL divergence between the
true kernel $P$ and the subjective kernel $Q_\theta$ along the stationary
distribution induced by $\pi^\ast$.
\end{enumerate}
\end{definition}

Thus a BN solution is a fixed point of the correspondence \( (\pi,\theta)
\;\mapsto\;
BR(\theta)
\times
\Theta^\ast(\pi). \)

\subsection{Regularity Conditions}
\label{sec:BN-regularity}

We now impose regularity assumptions ensuring that the best-response and
pseudo-true parameter correspondences are well-behaved. Let \( \Sigma := \Delta(A)^X \)
denote the set of stationary randomized policies, endowed with the product
topology (equivalently, the Euclidean topology on the stacked vector
$(\pi(a\mid x))_{x,a}$).

\begin{assumption}[Primitives and continuity]
\label{ass:primitives}
The following hold:
\begin{enumerate}
\item[(i)] The state and action spaces $X$ and $A$ are finite, and
$\beta\in(0,1)$.
\item[(ii)] The reward function $r(x,a)$ is bounded.
\item[(iii)] The parameter space $\Theta\subset\mathbb{R}^k$ is nonempty, compact,
and convex.
\item[(iv)] For each $(x,a,x')$, the mapping
$\theta\mapsto Q_\theta(x'\mid x,a)$ is continuous on $\Theta$.
\end{enumerate}
\end{assumption}

\begin{assumption}[Ergodicity and continuity of stationary distributions]
\label{ass:ergodicity}
For every stationary policy $\pi\in\Sigma$:
\begin{enumerate}
\item[(i)] The Markov chain on $X$ with transition kernel \( P_\pi(x' \mid x)
=
\sum_{a\in A}\pi(a\mid x) P(x'\mid x,a) \)
is irreducible and aperiodic, and hence admits a unique stationary
distribution $\mu_\pi$.
\item[(ii)] The mapping $\pi\mapsto\mu_\pi$ is continuous from $\Sigma$ to
$\Delta(X)$.
\end{enumerate}
\end{assumption}

\begin{assumption}[Well-behaved KL functional]
\label{ass:KL}
For every $(x,a)\in X\times A$ and every $\theta\in\Theta$,
$P(\cdot\mid x,a)\ll Q_\theta(\cdot\mid x,a)$, so that
$D_{\mathrm{KL}}(P(\cdot\mid x,a)\Vert Q_\theta(\cdot\mid x,a))<\infty$.
Moreover, the long-run KL divergence $D(\theta\mid\pi)$ is finite and
jointly continuous in $(\theta,\pi)\in\Theta\times\Sigma$, and convex in
$\theta$ for each fixed $\pi$.
\end{assumption}

Assumption~\ref{ass:primitives} ensures that, for each $\theta$, the subjective MDP
is a standard discounted MDP with finite state and action spaces.  Under
Assumption~\ref{ass:ergodicity}, long-run frequencies are well defined and vary
continuously with $\pi$.  Assumption~\ref{ass:KL} guarantees that the KL criterion is
finite, continuous, and yields convex pseudo-true parameter sets.

\subsection{Regularity of $BR(\theta)$ and $\Theta^\ast(\pi)$}

\begin{lemma}[Properties of the pseudo-true parameter correspondence]
\label{lem:theta-star}
Under Assumptions~\ref{ass:primitives}--\ref{ass:KL}, for each
$\pi\in\Sigma$:
\begin{enumerate}
\item[(i)] The set $\Theta^\ast(\pi)$ is nonempty, compact, and convex.
\item[(ii)] The correspondence $\Theta^\ast:\Sigma\rightrightarrows\Theta$ is
upper hemicontinuous with nonempty compact convex values.
\end{enumerate}
\end{lemma}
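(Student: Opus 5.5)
Part (i) is a direct application of the Weierstrass extreme value theorem together with convexity. Fix $\pi\in\Sigma$. By Assumption~\ref{ass:KL}, the map $\theta\mapsto D(\theta\mid\pi)$ is finite and continuous on $\Theta$. By Assumption~\ref{ass:primitives}(iii), $\Theta$ is nonempty and compact. Hence the minimum value $D^\ast(\pi):=\min_{\theta\in\Theta}D(\theta\mid\pi)$ is attained, so $\Theta^\ast(\pi)\neq\emptyset$.

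The set $\Theta^\ast(\pi)=\{\theta\in\Theta: D(\theta\mid\pi)\le D^\ast(\pi)\}$ is a sublevel set of a continuous function on a compact set. It is therefore closed in $\Theta$, and hence compact. Convexity follows because $\Theta$ is convex and $D(\cdot\mid\pi)$ is convex by Assumption~\ref{ass:KL}. If $\theta_1,\theta_2\in\Theta^\ast(\pi)$ and $\lambda\in[0,1]$, then
\[
D(\lambda\theta_1+(1-\lambda)\theta_2\mid\pi)\le \lambda D^\ast(\pi)+(1-\lambda)D^\ast(\pi)=D^\ast(\pi),
\]
so the convex combination is again a minimizer.

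For part (ii), the values are already nonempty, compact and convex by (i), so only upper hemicontinuity remains. I would apply Berge's maximum theorem to the minimization of $-D$ over the constant correspondence $\pi\mapsto\Theta$. This correspondence is trivially continuous with compact values, and the objective is jointly continuous on $\Theta\times\Sigma$ by Assumption~\ref{ass:KL}. Berge then gives directly that the argmin correspondence is upper hemicontinuous and that $D^\ast$ is continuous.

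To keep the proof self-contained, I would instead verify the closed-graph property and invoke compactness of the range $\Theta$. Take $\pi_n\to\pi$ and $\theta_n\in\Theta^\ast(\pi_n)$ with $\theta_n\to\theta$. The limit satisfies $\theta\in\Theta$ since $\Theta$ is closed. For any $\theta'\in\Theta$ we have $D(\theta_n\mid\pi_n)\le D(\theta'\mid\pi_n)$. Letting $n\to\infty$ and using joint continuity gives $D(\theta\mid\pi)\le D(\theta'\mid\pi)$, so $\theta\in\Theta^\ast(\pi)$. Hence the graph is closed. A closed-graph correspondence into a compact Hausdorff space is upper hemicontinuous, which completes (ii).

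The only delicate point is the joint continuity used in passing to the limit. This is exactly what Assumption~\ref{ass:KL} posits, and it is consistent with Assumption~\ref{ass:ergodicity}(ii) and Assumption~\ref{ass:primitives}(iv), so I take it as given rather than deriving it. Everything else is routine.
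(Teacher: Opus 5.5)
Your proof is correct and follows essentially the same route as the paper: Weierstrass plus closedness and convexity of the argmin give (i), and for (ii) the paper invokes Berge's maximum theorem with the constant constraint set $\Theta$, while your closed-graph argument (with compactness of $\Theta$) is just Berge's proof specialized to that case. One small slip: to obtain the argmin of $D$ from Berge's theorem you maximize $-D$ (equivalently minimize $D$), not minimize $-D$.
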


\begin{proof}
See Appendix ~\ref{proof:lem-theta-star}.
\end{proof}

\begin{lemma}[Properties of the best-response correspondence]
\label{lem:BR}
Under Assumptions~\ref{ass:primitives}, for each $\theta\in\Theta$:
\begin{enumerate}
\item[(i)] The set $BR(\theta)$ is nonempty, compact, and convex.
\item[(ii)] The correspondence $BR:\Theta\rightrightarrows\Sigma$ is upper
hemicontinuous with nonempty compact convex values.
\end{enumerate}
\end{lemma}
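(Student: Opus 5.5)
The natural route is to write $BR(\theta)$ as a product of per-state faces of simplices and then get upper hemicontinuity from a closed-graph argument. For each $\theta\in\Theta$ and $(x,a)\in X\times A$, define the subjective $Q$-function
\[
q_\theta(x,a) := r(x,a) + \beta\sum_{x'\in X} Q_\theta(x'\mid x,a)\,V_\theta(x'),
\]
and the per-state argmax set $A_\theta(x) := \arg\max_{b\in A} q_\theta(x,b)$. By definition, $\pi\in BR(\theta)$ if and only if, for every $x$, $\pi(\cdot\mid x)$ is supported on $A_\theta(x)$. Hence
\[
BR(\theta) = \prod_{x\in X} \Delta\bigl(A_\theta(x)\bigr),
\]
where $\Delta(A_\theta(x))\subseteq\Delta(A)$ is the face of the simplex spanned by the vertices indexed by $A_\theta(x)$.

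For part (i), $A$ is finite, so each $A_\theta(x)$ is nonempty. Each $\Delta(A_\theta(x))$ is then a nonempty closed face of a simplex, hence nonempty, compact and convex, and a finite product of such sets inherits all three properties in $\Sigma$. Equivalently, $BR(\theta)$ is cut out of the compact convex set $\Sigma$ by the linear equalities $\pi(a\mid x)=0$ for $a\notin A_\theta(x)$. Existence of $V_\theta$ as the unique Bellman fixed point is already stated in the paper, via contraction with modulus $\beta$.

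Part (ii) is where the real work lies, and the key step is continuity of $\theta\mapsto V_\theta$. I would prove it with the contraction estimate. Let $T_\theta$ denote the Bellman operator and $\|r\|_\infty\le R$, so $\|V_\theta\|_\infty\le R/(1-\beta)$. Then
\[
\|V_\theta - V_{\theta'}\|_\infty \le \|T_\theta V_\theta - T_{\theta'}V_\theta\|_\infty + \beta\|V_\theta - V_{\theta'}\|_\infty,
\]
which gives
\[
\|V_\theta - V_{\theta'}\|_\infty \le \frac{\beta R}{(1-\beta)^2}\,\max_{x,a}\sum_{x'}\bigl|Q_\theta(x'\mid x,a) - Q_{\theta'}(x'\mid x,a)\bigr|.
\]
By Assumption~\ref{ass:primitives}(iv) and finiteness of $X$ and $A$, the right-hand side tends to $0$ as $\theta'\to\theta$. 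Consequently each $q_\theta(x,a)$ is continuous in $\theta$.

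Next I show that the graph of $BR$ is closed. Take $\theta_n\to\theta$ and $\pi_n\in BR(\theta_n)$ with $\pi_n\to\pi$. Suppose $\pi(a\mid x)>0$. Then $\pi_n(a\mid x)>0$ for all large $n$, so $q_{\theta_n}(x,a)\ge q_{\theta_n}(x,b)$ for every $b\in A$. Passing to the limit with continuity of $q$ gives $q_\theta(x,a)\ge q_\theta(x,b)$, so $a\in A_\theta(x)$ and therefore $\pi\in BR(\theta)$.

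Finally, $BR$ has closed graph and takes values in the compact space $\Sigma$. The closed-graph theorem for correspondences into a compact Hausdorff space then yields upper hemicontinuity, and combining this with part (i) gives nonempty compact convex values.

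I expect the only delicate point to be the continuity of $V_\theta$; everything else is finite-dimensional bookkeeping. Note also that lower hemicontinuity generally fails, since the argmax sets can shrink in the limit; this is why the statement asserts only upper hemicontinuity.
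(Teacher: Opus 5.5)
Your proposal is correct and matches the paper's proof in substance. Part (i) uses the same product-of-faces decomposition $BR(\theta)=\prod_{x\in X}\Delta(A_\theta(x))$, and part (ii) rests on the same key fact, namely continuity of $\theta\mapsto V_\theta$ (hence of each $q_\theta(x,a)$) from the uniform $\beta$-contraction. You make the paper's ``standard contraction arguments'' explicit via $\|V_\theta-V_{\theta'}\|_\infty\le(1-\beta)^{-1}\|T_\theta V_\theta-T_{\theta'}V_\theta\|_\infty$. You also get upper hemicontinuity from a direct closed-graph argument into the compact set $\Sigma$, where the paper instead uses upper hemicontinuity of the finite argmax sets $A^\ast_\theta(x)$ plus stability under finite products; your route sidesteps the paper's easy but unstated passage from $A^\ast_\theta(x)$ to the faces $\Delta(A^\ast_\theta(x))$.
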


\begin{proof}
    See Appendix~\ref{proof:lem-BR}.
\end{proof}

\subsection{Existence of Infinite-Horizon Berk--Nash Solutions}

We are now ready to state and prove an existence result for BN solutions.

\begin{theorem}[Existence of Infinite-Horizon BN Solution]
\label{thm:BN-existence}
Suppose Assumptions~\ref{ass:primitives}--\ref{ass:KL} hold.  Then there
exists at least one pair $(\pi^\ast,\theta^\ast)\in\Sigma\times\Theta$ that
constitutes an infinite-horizon Berk--Nash solution, i.e., \(\pi^\ast\in BR(\theta^\ast) \text{and}\quad
\theta^\ast\in \Theta^\ast(\pi^\ast).\)
\end{theorem}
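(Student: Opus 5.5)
The plan is to apply Kakutani's fixed-point theorem to the product correspondence
\[
\Phi:\Sigma\times\Theta\rightrightarrows\Sigma\times\Theta,\qquad
\Phi(\pi,\theta):=BR(\theta)\times\Theta^\ast(\pi).
\]
A fixed point $(\pi^\ast,\theta^\ast)\in\Phi(\pi^\ast,\theta^\ast)$ is exactly a pair with $\pi^\ast\in BR(\theta^\ast)$ and $\theta^\ast\in\Theta^\ast(\pi^\ast)$, which is the definition of a BN solution.

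First I verify the domain hypotheses. The set $\Sigma=\Delta(A)^X$ is a finite product of simplices, hence a nonempty, compact, convex subset of $\mathbb{R}^{S m}$. By Assumption~\ref{ass:primitives}(iii), $\Theta\subset\mathbb{R}^k$ is nonempty, compact and convex. Therefore $K:=\Sigma\times\Theta$ is a nonempty, compact, convex subset of the Euclidean space $\mathbb{R}^{Sm+k}$.

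Next I check the values of $\Phi$. By Lemma~\ref{lem:BR}(i), $BR(\theta)$ is nonempty, compact and convex. By Lemma~\ref{lem:theta-star}(i), $\Theta^\ast(\pi)$ is nonempty, compact and convex. A Cartesian product of nonempty, compact, convex sets has the same three properties, so every $\Phi(\pi,\theta)$ is a nonempty, compact, convex subset of $K$.

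The remaining step, which I expect to be the only real obstacle, is showing that $\Phi$ has a closed graph on $K$. Kakutani's theorem is usually stated with this hypothesis; for compact-valued maps into a compact Hausdorff space it is equivalent to upper hemicontinuity with closed values. I argue coordinatewise. Lemma~\ref{lem:BR}(ii) says $BR$ is upper hemicontinuous with closed values into the compact set $\Sigma$, so its graph $\{(\theta,\pi):\pi\in BR(\theta)\}$ is closed in $\Theta\times\Sigma$. Lemma~\ref{lem:theta-star}(ii) gives, in the same way, that the graph of $\Theta^\ast$ is closed in $\Sigma\times\Theta$.

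Now take a sequence $(\pi_n,\theta_n)\to(\pi,\theta)$ and points $(\pi_n',\theta_n')\in\Phi(\pi_n,\theta_n)$ with $(\pi_n',\theta_n')\to(\pi',\theta')$. Since $\pi_n'\in BR(\theta_n)$, closedness of the graph of $BR$ gives $\pi'\in BR(\theta)$. Since $\theta_n'\in\Theta^\ast(\pi_n)$, closedness of the graph of $\Theta^\ast$ gives $\theta'\in\Theta^\ast(\pi)$. Hence $(\pi',\theta')\in\Phi(\pi,\theta)$, so the graph of $\Phi$ is closed.

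If one prefers not to pass through the lemmas for this last step, both graph closures can be checked directly. For $\Theta^\ast$, the joint continuity of $D$ from Assumption~\ref{ass:KL} gives $D(\theta_n'\mid\pi_n)\le D(\vartheta\mid\pi_n)$ for every $\vartheta\in\Theta$, and passing to the limit yields $D(\theta'\mid\pi)\le D(\vartheta\mid\pi)$, so $\theta'$ is a minimizer. For $BR$, continuity of $V_\theta$ in $\theta$ follows because the Bellman operator is a $\beta$-contraction whose coefficients are continuous in $\theta$. The strict inequalities separating non-optimal actions from optimal ones are then preserved in the limit, which shows the limiting policy puts no mass on suboptimal actions.

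With all hypotheses verified, Kakutani's theorem yields a fixed point $(\pi^\ast,\theta^\ast)\in K$, which is the desired BN solution.
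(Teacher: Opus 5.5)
Your proposal is correct and matches the paper's proof: both apply Kakutani's theorem to $F(\pi,\theta)=BR(\theta)\times\Theta^\ast(\pi)$ on the nonempty compact convex set $\Sigma\times\Theta$, taking nonempty compact convex values and upper hemicontinuity from Lemmas~\ref{lem:theta-star} and~\ref{lem:BR}. Your closed-graph formulation is equivalent here because the values are closed and the target is compact, and your direct limit arguments for the two graphs are a more careful rendering of the same step rather than a different route.
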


\begin{proof}
    See Appendix~\ref{proof:thm-BN}.
\end{proof}

\medskip

The theorem shows that, under mild compactness, continuity, and ergodicity
assumptions, there always exists at least one fixed point at which (i) the
agent's stationary policy is subjectively optimal for some parameter
$\theta^\ast$, and (ii) this parameter is statistically consistent with the
long-run data generated by that same policy.

\section{LP Characterization of the Subjective Best Response}
\label{Sec:3}

In this section we show that the subjectively optimal value function and
best-response policy can be computed via linear programming because the state
and action spaces are finite.  The results rely on the same regularity
assumptions introduced in Sect.~\ref{sec:BN-regularity}, particularly
Assumption~\ref{ass:primitives}.

\subsection{Primal and Dual Linear Programs}

For a fixed parameter $\theta\in\Theta$, consider the subjective MDP
$\mathcal{M}_\theta=(X,A,Q_\theta,r,\beta)$.
Let $\mathbb{R}^{X}$ denote the space of real functions $v:X\to\mathbb{R}$.

\paragraph{Primal LP (Value Function Form).}
Introduce decision variables $v(x)\in\mathbb{R}$ for each $x\in X$.
The subjective optimal control problem is equivalent to the linear program
\begin{align}
\label{eq:primal-LP}
\min_{v(\cdot)}\;\;
& \sum_{x\in X} v(x)
\\[1mm]
\text{s.t.}\quad
& v(x)
\;\ge\;
r(x,a)
+
\beta\sum_{x'\in X}
Q_\theta(x' \mid x,a)\, v(x'),
\qquad
\forall (x,a)\in X\times A.
\nonumber
\end{align}
Under Assumption~\ref{ass:primitives}, the feasible set is nonempty and compact,
and the objective is linear.  Standard MDP theory implies:

\begin{lemma}[Correctness of the Primal LP]
\label{lem:primal-LP-correct}
Under Assumption~\ref{ass:primitives}, the unique optimal solution $v^\star$
of \eqref{eq:primal-LP} satisfies \(v^\star(x)=V_\theta(x) \text{for} \forall x\in X,\)
where $V_\theta$ is the subjective optimal value function.
Moreover, for each $x$, the actions that bind the Bellman inequality correspond
to subjectively optimal actions under $Q_\theta$.
\end{lemma}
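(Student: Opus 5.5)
The argument rests on monotonicity and contraction of the subjective Bellman operator
\[
(T_\theta v)(x) := \max_{a\in A}\Big\{ r(x,a) + \beta \sum_{x'} Q_\theta(x'\mid x,a)\, v(x') \Big\}.
\]
First, $T_\theta$ is monotone: if $v\ge w$ componentwise then $T_\theta v \ge T_\theta w$, because each $Q_\theta(\cdot\mid x,a)$ is a probability vector with nonnegative entries. Second, it is a $\beta$-contraction in the sup-norm. Its unique fixed point is $V_\theta$, which is the Bellman characterization recalled in Sect.~2.3.

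In this notation, a vector $v$ is feasible for \eqref{eq:primal-LP} exactly when $v \ge T_\theta v$. Feasibility itself is immediate. Take $v \equiv c$ with $c = \max_{x,a}|r(x,a)|/(1-\beta)$, which is finite by Assumption~\ref{ass:primitives}(ii); then $r(x,a) + \beta c \le (1-\beta)c + \beta c = c$. I would not rely on the claimed compactness of the feasible set, since that set is unbounded above. What matters is that the objective is bounded below on it, which follows from the next step.

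The key step is to show that every feasible $v$ dominates $V_\theta$. Iterating $v \ge T_\theta v$ and using monotonicity gives
\[
v \;\ge\; T_\theta v \;\ge\; T_\theta^2 v \;\ge\; \cdots \;\ge\; T_\theta^n v \;\longrightarrow\; V_\theta
\]
by the contraction property, so $v \ge V_\theta$ componentwise. On the other hand, $V_\theta$ is feasible, since $V_\theta = T_\theta V_\theta$ means each constraint holds with the max, hence with every action.

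It follows that $V_\theta$ attains the minimum of $\sum_x v(x)$. Uniqueness follows because the objective has strictly positive weights: any feasible $v$ with $v\ge V_\theta$ and $\sum_x v(x) = \sum_x V_\theta(x)$ must equal $V_\theta$ componentwise. This gives $v^\star = V_\theta$.

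For the second claim, at $v^\star = V_\theta$ the constraint for $(x,a)$ binds exactly when
\[
r(x,a) + \beta \sum_{x'} Q_\theta(x'\mid x,a)\, V_\theta(x') \;=\; V_\theta(x) \;=\; \max_b \{\cdots\}.
\]
This says precisely that $a$ lies in the argmax set appearing in the definition of $BR(\theta)$. The correspondence is therefore an "if and only if", and for each $x$ at least one constraint binds because the max over a finite set is attained. As an optional complement, I could note via LP duality that the dual occupation-measure variables $\rho(x,a)$ can be positive only on binding pairs, by complementary slackness. That connects to the dual LP but is not needed for the statement.

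The main obstacle is the domination step, and specifically getting the monotone iteration right rather than invoking "standard MDP theory". Everything else is bookkeeping. A secondary point is correcting the paper's compactness remark: existence of the minimizer should come from the explicit feasible lower bound $V_\theta$, not from compactness.
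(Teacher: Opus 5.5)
Your proof is correct, and it takes a different route from the one the paper indicates. The paper gives no real proof here. It asserts that the feasible set of \eqref{eq:primal-LP} is nonempty and compact with a linear objective, and then defers to ``standard MDP theory.'' As you observe, that compactness claim is false: if $v$ is feasible then so is $v+t\mathbf{1}$ for every $t\ge 0$, because $t\ge\beta t$. So attainment cannot come from a Weierstrass-type argument on the feasible set. Even if it could, compactness plus linearity would give existence but not the uniqueness the lemma asserts.

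Your monotone iteration $v\ge T_\theta v\ge T_\theta^2 v\ge\cdots\to V_\theta$ replaces this and does all the work at once. It bounds the objective below on the feasible set, and it shows the feasible point $V_\theta$ attains that bound. Because the objective weights are strictly positive, it also forces any other minimizer to coincide with $V_\theta$. The binding-constraint claim then follows from $V_\theta=T_\theta V_\theta$ as an exact equivalence with the argmax sets defining $BR(\theta)$.

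The only outside input you use is the identification of $\sup_\pi V_\theta^\pi$ with the Bellman fixed point, which the paper itself asserts in Sect.~2.3. Your argument is therefore self-contained relative to the paper and more informative than its sketch.
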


\paragraph{Dual LP (Discounted Occupation Measure Form).}
Fix an initial distribution $\mu_0\in\Delta(X)$.
Define discounted occupation measures \( \eta(x,a)\ge 0, \text{ for } (x,a)\in X\times A, \)
interpreted as the expected discounted number of visits to $(x,a)$ under the
subjective model. The dual LP is
\begin{align}
\label{eq:dual-LP}
\max_{\eta(\cdot,\cdot)}\;\;
& \sum_{x\in X}\sum_{a\in A} r(x,a)\, \eta(x,a)
\\[1mm]
\text{s.t.}\quad
& \sum_{a\in A} \eta(x,a)
=
\mu_0(x)
+
\beta \sum_{x'\in X}
\sum_{a'\in A}
Q_\theta(x \mid x',a')\, \eta(x',a'),
\qquad \forall x\in X,
\nonumber
\\
& \eta(x,a) \ge 0,
\qquad \forall (x,a)\in X\times A.
\nonumber
\end{align}

\begin{lemma}[Correctness of the Dual LP]
\label{lem:dual-LP-correct}
Under Assumption~\ref{ass:primitives}, \eqref{eq:dual-LP} has a nonempty, compact,
polyhedral feasible set and attains a maximum.  If $\eta^\star$ solves
\eqref{eq:dual-LP}, then
\[
\pi_\theta(a\mid x)
=
\frac{\eta^\star(x,a)}
{\sum_{b\in A} \eta^\star(x,b)}
\quad\text{whenever }\sum_{b\in A}\eta^\star(x,b)>0,
\]
defines a subjectively optimal stationary policy \(\pi_\theta \in BR(\theta).\)
\end{lemma}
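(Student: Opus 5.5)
Write $Q_\theta^\pi(x'\mid x):=\sum_a\pi(a\mid x)Q_\theta(x'\mid x,a)$ for the subjective state kernel under $\pi$. The plan is standard LP duality for discounted MDPs, using only finiteness of $X,A$ and $\beta<1$ from Assumption~\ref{ass:primitives}. The argument has three steps: (a) the feasible set of \eqref{eq:dual-LP} is nonempty and bounded; (b) the maximum is attained; (c) the normalized optimal occupation measure is a best response. Correctness of the primal LP (Lemma~\ref{lem:primal-LP-correct}) is assumed, so $v^\star=V_\theta$ solves \eqref{eq:primal-LP}.

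\emph{Feasible set.} Polyhedrality is immediate: \eqref{eq:dual-LP} has finitely many linear equalities and nonnegativity constraints. For nonemptiness, fix any stationary $\pi$ and set
\[
\eta_\pi(x,a)=\pi(a\mid x)\,\bigl[\mu_0^\top (I-\beta Q^\pi_\theta)^{-1}\bigr](x).
\]
This vector is well defined and nonnegative, since the Neumann series $\sum_t\beta^t (Q^\pi_\theta)^t$ converges. It satisfies the flow constraints by direct substitution. For boundedness, sum the equality constraint over $x$ and use $\sum_x Q_\theta(x\mid x',a')=1$. This gives $\sum_{x,a}\eta(x,a)=1+\beta\sum_{x',a'}\eta(x',a')$, so the total mass is $1/(1-\beta)$. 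With $\eta\ge0$, the feasible set is a closed subset of a scaled simplex, hence compact. A linear objective on a nonempty compact set attains its maximum (Weierstrass).

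\emph{Optimality of the induced policy.} Take $\eta^\star$ optimal and let $v^\star=V_\theta$ be the primal optimum. Weak duality comes from multiplying each primal constraint by $\eta(x,a)\ge0$ and summing. Using the dual equality constraints, this gives $\sum r\eta\le\sum_x\mu_0(x)v(x)$. Note that the primal objective as written has unit weights, so I would either take $\mu_0$ uniform (up to scaling by $S$) or simply use the weighted bound directly; only the inequality is needed.

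Next take $\pi^\ast$ greedy with respect to $V_\theta$. Its occupation measure $\eta_{\pi^\ast}$ is supported on the binding constraints, so the inequality is tight there. Hence the dual optimum value equals $\sum_x\mu_0(x)V_\theta(x)$. Now apply complementary slackness to $\eta^\star$. Because $\sum r\eta^\star$ equals this value and every slack $v^\star(x)-r(x,a)-\beta\sum_{x'}Q_\theta(x'\mid x,a)v^\star(x')$ is nonnegative, the weighted slack sum vanishes. Therefore $\eta^\star(x,a)>0$ forces $a\in\arg\max_b\{r(x,b)+\beta\sum_{x'}Q_\theta(x'\mid x,b)V_\theta(x')\}$. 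So $\pi_\theta(a\mid x)>0$ implies $a$ is greedy at every state with $\sum_b\eta^\star(x,b)>0$.

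\emph{Main obstacle: unvisited states.} At states with $\sum_b\eta^\star(x,b)=0$ the formula leaves $\pi_\theta$ undefined. The definition of $BR(\theta)$ quantifies over all $x$, so these states must be handled. There are two ways to do this:
\begin{itemize}
\item If $\mu_0$ has full support, then $\sum_a\eta^\star(x,a)\ge\mu_0(x)>0$ everywhere by the constraint, and the issue disappears. I would adopt this as the working convention.
\item Otherwise, complete $\pi_\theta$ at unvisited states with any action in the argmax, which is nonempty because $A$ is finite. The greedy condition is state-wise given $V_\theta$, so the completed policy lies in $BR(\theta)$.
\end{itemize}
Either way, $\pi_\theta\in BR(\theta)$.
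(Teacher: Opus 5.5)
Your proof is correct, and it is the standard LP-duality argument for discounted MDPs: the slack identity $\sum_x\mu_0(x)V_\theta(x)-\sum_{x,a}r(x,a)\eta(x,a)=\sum_{x,a}\eta(x,a)\,\mathrm{slack}(x,a)$, tightness at a greedy policy's occupation measure, then complementary slackness; the paper itself gives no proof and only appeals to ``standard MDP theory.'' Your two caveats go beyond the paper and are worth keeping: the unit-weight primal \eqref{eq:primal-LP} is not the exact dual of \eqref{eq:dual-LP} unless $\mu_0$ is uniform, so only the $\mu_0$-weighted weak-duality inequality is valid, and $\pi_\theta\in BR(\theta)$ holds only after completing $\pi_\theta$ with greedy actions at zero-occupation states (or taking $\mu_0$ with full support).
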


\subsection{Statistical Consistency}

Let $\pi$ be any stationary policy and let $\mu_\pi$ be its unique stationary
distribution under the true kernel $P$, guaranteed by
Assumption~\ref{ass:ergodicity}.  The long-run Kullback--Leibler divergence
between the true kernel and the subjective kernel is
\[
D(\theta \mid \pi)
:=
\sum_{x\in X}
\sum_{a\in A}
\mu_\pi(x)\,\pi(a\mid x)\;
D_{\mathrm{KL}}
\!\left(
P(\cdot\mid x,a)
\;\Vert\;
Q_\theta(\cdot\mid x,a)
\right).
\]

\begin{definition}[Pseudo-True Parameter Set]
\[
\Theta^\ast(\pi)
=
\arg\min_{\theta\in\Theta}D(\theta\mid\pi).
\]
Under Assumptions~\ref{ass:primitives}--\ref{ass:KL}, the set $\Theta^\ast(\pi)$ is
nonempty, compact, and convex, and the correspondence
$\pi\mapsto\Theta^\ast(\pi)$ is upper hemicontinuous.
\end{definition}

 \subsection{Joint Programming Characterization of Berk--Nash Equilibrium}

We now show how the dual LP representation of subjective optimality and the
KL-based statistical consistency condition can be combined into a single joint
program that characterizes Berk--Nash equilibria.  The resulting formulation
is a (generally nonlinear) feasibility problem rather than a linear program,
but it explicitly couples  the \emph{subjective} occupation measure under $Q_\theta$, and
 the \emph{true} stationary state--action frequencies under $P$.

\begin{assumption}[Finite parameter space]
\label{ass:finite-theta}
The parameter space is finite \(
\Theta = \{\theta^1,\dots,\theta^K\},
\) and for each $k\in\{1,\dots,K\}$ we denote the corresponding subjective kernel
by $Q^k := Q_{\theta^k}$.
\end{assumption}

For each $(x,a)\in X\times A$ and each model $k$, define the (true vs.\ subjective)
state-wise KL cost \(c_k(x,a)
:=
D_{\mathrm{KL}}
\bigl(
P(\cdot\mid x,a)
\;\Vert\;
Q^k(\cdot\mid x,a)
\bigr), \)
which is finite and well-defined by Assumption~\ref{ass:KL}.  Then, as noted
earlier, for a stationary policy $\pi$ with true stationary distribution
$\mu_\pi$, the long-run KL divergence can be written as
\(D(\theta^k\mid\pi)
=
\sum_{x\in X}\sum_{a\in A}
\mu_\pi(x)\,\pi(a\mid x)\,c_k(x,a). \) To obtain a linear representation of this expression, it is convenient to work
directly with the \emph{true stationary state--action frequencies}.

\paragraph{True stationary state--action frequencies.}
Let $d(x,a)\ge 0$ denote the joint stationary distribution of $(x,a)$ under the
true kernel $P$ and a stationary policy $\pi$, i.e., \( d(x,a) = \mu_\pi(x)\,\pi(a\mid x). \)
Then $d(\cdot,\cdot)$ satisfies
\begin{align}
\label{eq:true-stationary-d}
\sum_{x\in X}\sum_{a\in A} d(x,a) &= 1,\\[1mm]
\sum_{a\in A} d(x',a)
&=
\sum_{x\in X}\sum_{a\in A}
d(x,a) P(x'\mid x,a),
\qquad \forall x'\in X.
\nonumber
\end{align}
Conversely, any $d$ satisfying \eqref{eq:true-stationary-d} induces a stationary
policy (on the support of $d$) via \( \pi(a\mid x)
=
\frac{d(x,a)}{\sum_{b\in A}d(x,b)}, \text{whenever }\sum_{b\in A}d(x,b)>0. \) With this notation, the long-run KL divergence under model $k$ becomes the
\emph{linear} functional \(D_k(d)
:=
\sum_{x\in X}\sum_{a\in A} d(x,a)\,c_k(x,a).\)

\paragraph{Subjective discounted occupation measures.}
For each candidate model $k$, the dual LP \eqref{eq:dual-LP} yields the
\emph{subjective} discounted occupation measure $\eta^k(x,a)$ under $Q^k$:
\begin{align}
\label{eq:dual-LP-k}
\max_{\eta^k(\cdot,\cdot)}\;\;
& \sum_{x\in X}\sum_{a\in A} r(x,a)\, \eta^k(x,a)
\\[1mm]
\text{s.t.}\quad
& \sum_{a\in A} \eta^k(x,a)
=
\mu_0(x)
+
\beta \sum_{x'\in X}
\sum_{a'\in A}
Q^k(x \mid x',a')\, \eta^k(x',a'),
\quad \forall x\in X,
\nonumber
\\
& \eta^k(x,a) \ge 0,
\qquad \forall (x,a)\in X\times A.
\nonumber
\end{align}
By Lemma~\ref{lem:dual-LP-correct}, any optimal solution $\eta^{k,\star}$ of
\eqref{eq:dual-LP-k} induces a best-response policy $\pi^k\in BR(\theta^k)$ via
normalization \[\pi^k(a\mid x)
=
\frac{\eta^{k,\star}(x,a)}
{\sum_{b\in A} \eta^{k,\star}(x,b)} \text{ for }\sum_{b\in A}\eta^{k,\star}(x,b)>0. \]

\subsubsection{Joint Feasibility Problem for Berk--Nash Equilibrium}

We can now state a joint program that characterizes BN equilibria as feasible
points of a coupled system involving subjective discounted flows under a candidate model $k$, and
 true stationary frequencies under the induced policy.

\begin{definition}[Joint BN Feasibility Problem]
\label{def:BN-joint-program}
Under Assumptions~\ref{ass:primitives}, \ref{ass:ergodicity}, \ref{ass:KL}
and~\ref{ass:finite-theta}, consider the following system of variables:
\[
k\in\{1,\dots,K\},\quad
\eta(x,a)\ge 0,\quad
d(x,a)\ge 0,\quad
\pi(a\mid x)\in[0,1].
\]
We say that $(k,\eta,d,\pi)$ is a \emph{joint BN-feasible point} if:
\begin{enumerate}
\item[(i)] \textbf{Subjective discounted flow under model $k$:}
\begin{align}
\sum_{a\in A} \eta(x,a)
&=
\mu_0(x)
+
\beta \sum_{x'\in X}
\sum_{a'\in A}
Q^k(x \mid x',a')\, \eta(x',a'),
\qquad \forall x\in X,
\label{eq:BN-joint-subjective-flow}
\\
\eta(x,a) &\ge 0,\qquad \forall (x,a)\in X\times A.
\label{eq:BN-joint-subjective-nonneg}
\end{align}

\item[(ii)] \textbf{True stationary state--action frequencies:}
\begin{align}
\sum_{x\in X}\sum_{a\in A} d(x,a) &= 1,
\label{eq:BN-joint-true-norm}
\\
\sum_{a\in A} d(x',a)
&=
\sum_{x\in X}\sum_{a\in A}
d(x,a) P(x'\mid x,a),
\qquad \forall x'\in X,
\label{eq:BN-joint-true-flow}
\\
d(x,a) &\ge 0,\qquad \forall (x,a)\in X\times A.
\label{eq:BN-joint-true-nonneg}
\end{align}

\item[(iii)] \textbf{Consistency of the policy:}
The same stationary policy $\pi$ governs both the subjective and true
processes, in the sense that
\begin{align}
\pi(a\mid x)
&=
\frac{\eta(x,a)}{\sum_{b\in A}\eta(x,b)}
\quad\text{whenever }\sum_{b\in A}\eta(x,b)>0,
\label{eq:BN-joint-policy-from-eta}
\\
\pi(a\mid x)
&=
\frac{d(x,a)}{\sum_{b\in A}d(x,b)}
\quad\text{whenever }\sum_{b\in A}d(x,b)>0.
\label{eq:BN-joint-policy-from-d}
\end{align}
(If the denominators vanish in some state $x$, $\pi(\cdot\mid x)$ can be chosen
arbitrarily on $A$.)

\item[(iv)] \textbf{KL-minimizing parameter index:}
The selected model $k$ minimizes the long-run KL divergence:
\begin{equation}
\label{eq:BN-joint-KL-min}
\sum_{x\in X}\sum_{a\in A} d(x,a)\,c_k(x,a)
\;\le\;
\sum_{x\in X}\sum_{a\in A} d(x,a)\,c_\ell(x,a),
\qquad \forall \ell\in\{1,\dots,K\}.
\end{equation}
\end{enumerate}
\end{definition}

Note that \eqref{eq:BN-joint-subjective-flow}--\eqref{eq:BN-joint-subjective-nonneg}
are exactly the dual LP constraints for the subjective MDP under $Q^k$, while
\eqref{eq:BN-joint-true-norm}--\eqref{eq:BN-joint-true-nonneg} are the true
stationary constraints under $P$.  The equalities
\eqref{eq:BN-joint-policy-from-eta}--\eqref{eq:BN-joint-policy-from-d}
synchronize the policy driving the subjective and true processes, and
\eqref{eq:BN-joint-KL-min} picks a KL-minimizing parameter.

\begin{proposition}[Equivalence with Berk--Nash Equilibrium]
\label{prop:BN-joint-equivalence}
Under Assumptions~\ref{ass:primitives}--\ref{ass:ergodicity}--\ref{ass:KL}
and~\ref{ass:finite-theta}, the following are equivalent:
\begin{enumerate}
\item[(i)] $(\pi^\ast,\theta^\ast)$ is an infinite-horizon Berk--Nash equilibrium.
\item[(ii)] There exists a joint BN-feasible point $(k^\ast,\eta^\ast,d^\ast,\pi^\ast)$ in the
sense of Definition~\ref{def:BN-joint-program} such that $\theta^\ast=\theta^{k^\ast}$.
\end{enumerate}
\end{proposition}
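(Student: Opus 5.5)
The plan is to prove the two implications separately. Each one translates a Berk--Nash condition into the corresponding block of Definition~\ref{def:BN-joint-program}.

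Two interpretive points are needed first. Condition (i) of the joint program lists only the \emph{constraints} of the dual LP \eqref{eq:dual-LP-k}. Read literally, any feasible $\eta$ would then qualify, and feasibility alone cannot force $\pi\in BR(\theta^k)$. I will therefore read block (i) as requiring that $\eta$ be an \emph{optimal} solution of \eqref{eq:dual-LP-k}, which is what the remark after the definition and Lemma~\ref{lem:dual-LP-correct} intend. I will also take $\mu_0$ to have full support, $\mu_0(x)>0$ for all $x$. Then every feasible $\eta$ has state marginal $\rho(x):=\sum_a\eta(x,a)\ge \mu_0(x)>0$, so \eqref{eq:BN-joint-policy-from-eta} pins down $\pi(\cdot\mid x)$ at every state. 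Without this, optimality of $\eta$ says nothing about $\pi$ at states unreachable from $\mu_0$, while $BR$ constrains every state.

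\emph{(i)$\Rightarrow$(ii).} Let $(\pi^\ast,\theta^\ast)$ be a BN equilibrium with $\theta^\ast=\theta^{k^\ast}$.
\begin{enumerate}
\item Set $d^\ast(x,a):=\mu_{\pi^\ast}(x)\pi^\ast(a\mid x)$. By Assumption~\ref{ass:ergodicity} this satisfies \eqref{eq:BN-joint-true-norm}--\eqref{eq:BN-joint-true-nonneg}. Irreducibility gives $\mu_{\pi^\ast}>0$, so \eqref{eq:BN-joint-policy-from-d} holds.
\item Set $\rho^\top:=\mu_0^\top(I-\beta Q^{k^\ast}_{\pi^\ast})^{-1}$ and $\eta^\ast(x,a):=\rho(x)\pi^\ast(a\mid x)$. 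This is the discounted occupation measure of $\pi^\ast$ under $Q^{k^\ast}$. Expanding the Neumann series shows that it satisfies \eqref{eq:BN-joint-subjective-flow}, and by construction it satisfies \eqref{eq:BN-joint-policy-from-eta}.
\item Optimality of $\eta^\ast$: since $\pi^\ast$ only uses actions that are greedy with respect to $V_{\theta^\ast}$, we have $V^{\pi^\ast}_{\theta^\ast}=V_{\theta^\ast}$. Hence $\sum r\eta^\ast=\mu_0^\top V_{\theta^\ast}$. This equals the primal optimum of \eqref{eq:primal-LP} with weights $\mu_0$, so by LP weak duality (and Lemma~\ref{lem:primal-LP-correct}) $\eta^\ast$ is dual optimal.
\item Condition \eqref{eq:BN-joint-KL-min} is exactly $\theta^{k^\ast}\in\Theta^\ast(\pi^\ast)$, rewritten through the linear functional $D_k(d^\ast)=D(\theta^k\mid\pi^\ast)$.
\end{enumerate}

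\emph{(ii)$\Rightarrow$(i).} Let $(k^\ast,\eta^\ast,d^\ast,\pi^\ast)$ be joint BN-feasible.
\begin{enumerate}
\item Because $\rho>0$, $\pi^\ast$ is determined by $\eta^\ast$ at every state.
\item Complementary slackness between \eqref{eq:dual-LP-k} and its primal, whose optimum is $v^\star=V_{\theta^{k^\ast}}$ by Lemma~\ref{lem:primal-LP-correct}, gives $\eta^\ast(x,a)>0 \Rightarrow$ the Bellman inequality binds at $(x,a)$. That is, $a$ is a maximizer, so $\pi^\ast\in BR(\theta^{k^\ast})$, as in Lemma~\ref{lem:dual-LP-correct}.
\item Write $\bar d(x):=\sum_a d^\ast(x,a)$. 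By \eqref{eq:BN-joint-policy-from-d}, $d^\ast(x,a)=\bar d(x)\pi^\ast(a\mid x)$; at states with $\bar d(x)=0$ this holds trivially. Substituting into \eqref{eq:BN-joint-true-flow} shows that $\bar d$ is a stationary distribution of $P_{\pi^\ast}$. Uniqueness from Assumption~\ref{ass:ergodicity} forces $\bar d=\mu_{\pi^\ast}$, so $d^\ast=\mu_{\pi^\ast}\pi^\ast$.
\item Then \eqref{eq:BN-joint-KL-min} reads $D(\theta^{k^\ast}\mid\pi^\ast)\le D(\theta^\ell\mid\pi^\ast)$ for all $\ell$, i.e.\ $\theta^{k^\ast}\in\Theta^\ast(\pi^\ast)$.
\end{enumerate}

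The main obstacle is the subjective-optimality step in (ii)$\Rightarrow$(i): extracting $\pi^\ast\in BR$ at \emph{every} state from the occupation measure. This is exactly where the optimality reading of block (i) and the full-support choice of $\mu_0$ are indispensable, so I would state both explicitly at the start of the proof. An alternative to full support would be to add a constraint that $\pi$ be greedy at states where $\rho(x)=0$, but the full-support choice is the cleanest.
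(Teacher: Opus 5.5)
Your argument is correct, for the statement with $\eta$ required to be optimal (the only true version). It follows the same two-direction, block-by-block skeleton as the paper's proof, but it does genuine work exactly where the paper merely cites Lemma~\ref{lem:dual-LP-correct}. That lemma only says that an \emph{optimal} $\eta$ induces a policy in $BR(\theta)$.

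In (i)$\Rightarrow$(ii) the paper needs the converse: every best response is induced by some optimal $\eta^\ast$. You supply it through the explicit occupation measure $\eta^\ast=\rho\,\pi^\ast$, the identity $V^{\pi^\ast}_{\theta^\ast}=V_{\theta^\ast}$, and weak duality against the $\mu_0$-weighted primal. Reweighting is the right move, since the unit-weight \eqref{eq:primal-LP} is, up to scaling, dual to \eqref{eq:dual-LP} only for uniform $\mu_0$.

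In (ii)$\Rightarrow$(i) the paper applies the lemma to an $\eta$ that Definition~\ref{def:BN-joint-program} only requires to be feasible. It also asserts without argument that $d^\ast=\mu_{\pi^\ast}\pi^\ast$. Your complementary-slackness step closes the first point, and your factorization $d^\ast=\bar d\,\pi^\ast$ followed by uniqueness of the stationary law closes the second.

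Of your two amendments, the optimality reading is indispensable: under the literal definition, (ii)$\Rightarrow$(i) is false. Take $K=1$, all kernels with strictly positive entries, and $r(x,1)=1$, $r(x,2)=0$ for all $x$. The uniform policy $\pi$, its subjective occupation measure under $Q^1$, and $d=\mu_\pi\pi$ satisfy every block of the definition, yet $\pi\notin BR(\theta^1)$. The paper's proof tacitly relies on the same strengthening.

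The full-support assumption on $\mu_0$, by contrast, can be dropped. By Assumption~\ref{ass:KL}, $P(x'\mid x,a)>0$ implies $Q^k(x'\mid x,a)>0$, so the transition graph of $Q^k_\pi$ contains that of $P_\pi$. Assumption~\ref{ass:ergodicity} then makes $Q^k_\pi$ irreducible for every $\pi$. Any feasible $\eta$ satisfies $\rho^\top=\sum_{t\ge0}\beta^t\mu_0^\top(Q^k_{\pi_\eta})^t$ for any completion $\pi_\eta$ of $\eta/\rho$. Hence $\rho(x)>0$ at every state, whatever $\mu_0\in\Delta(X)$ is, and the unreachable-state issue you flag cannot arise under the standing assumptions.
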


\begin{proof}
    See Appendix ~\ref{prof: BN-joint-equivalence}.
\end{proof}

\medskip
Proposition~\ref{prop:BN-joint-equivalence} provides a \emph{single joint
programming} characterization of Berk--Nash equilibria when $\Theta$ is finite:
a BN equilibrium is precisely a feasible point of the coupled system
\eqref{eq:BN-joint-subjective-flow}--\eqref{eq:BN-joint-KL-min}.  In particular,
all dual LP constraints, true stationary constraints, policy consistency, and
KL-minimization conditions are encoded in one unified feasibility problem.

\subsection{Bilevel Programming Characterization of Berk--Nash Equilibrium}

We now combine subjective optimality (dual LP) and statistical consistency
(KL minimization) into a single bilevel optimization problem.  Throughout this
section we maintain Assumptions~\ref{ass:primitives}--\ref{ass:ergodicity}--\ref{ass:KL}.

\paragraph{Variables and induced policy.}
For each parameter $\theta\in\Theta$, let $\eta(\cdot,\cdot)$ denote a
discounted occupation measure for the subjective model $Q_\theta$.
Given any feasible $\eta$, define the induced stationary policy
$\pi_\eta$ by
\begin{equation}
\label{eq:pi-from-eta}
\pi_\eta(a\mid x)
=
\frac{\eta(x,a)}{\sum_{b\in A}\eta(x,b)}
\quad\text{whenever }\sum_{b\in A}\eta(x,b)>0,
\end{equation}
and choose $\pi_\eta(\cdot\mid x)$ arbitrarily on $A$ if
$\sum_{b\in A}\eta(x,b)=0$. Given $\pi_\eta$, let $\mu_{\pi_\eta}$ be its unique stationary distribution
under the true kernel $P$ (Assumption~\ref{ass:ergodicity}).  The associated
long-run KL divergence is
\begin{equation}
\label{eq:D-theta-pi-eta}
D(\theta\mid\pi_\eta)
=
\sum_{x\in X}\sum_{a\in A}
\mu_{\pi_\eta}(x)\,\pi_\eta(a\mid x)\;
D_{\mathrm{KL}}
\!\left(
P(\cdot\mid x,a)\;\Vert\;Q_\theta(\cdot\mid x,a)
\right).
\end{equation}

\paragraph{Lower-level (subjective control).}
For a fixed parameter $\theta$, the subjective best-response LP in occupation
measure form is
\begin{align}
\label{eq:lower-LP}
\max_{\eta(\cdot,\cdot)}\;\;
& \sum_{x\in X}\sum_{a\in A} r(x,a)\,\eta(x,a)
\\[1mm]
\text{s.t.}\quad
& \sum_{a\in A} \eta(x,a)
=
\mu_0(x)
+
\beta \sum_{x'\in X}
\sum_{a'\in A}
Q_\theta(x \mid x',a')\,\eta(x',a'),
\quad \forall x\in X,
\nonumber\\
& \eta(x,a)\ge 0,\qquad \forall (x,a)\in X\times A.
\nonumber
\end{align}
Let $\mathcal{S}(\theta)$ denote the set of optimal solutions of
\eqref{eq:lower-LP}.  By Lemma~\ref{lem:dual-LP-correct}, each
$\eta\in\mathcal{S}(\theta)$ induces a best-response policy
$\pi_\eta\in BR(\theta)$ via \eqref{eq:pi-from-eta}.

\paragraph{Upper-level (KL minimization).}
We define the \emph{Berk--Nash bilevel program} as
\begin{equation}
\label{eq:BN-BP}
\min_{\theta\in\Theta,\;\eta}\;
D(\theta\mid\pi_\eta)
\quad
\text{subject to}
\quad
\eta\in\mathcal{S}(\theta),
\tag{BN--BP}
\end{equation}
where the \emph{lower-level constraint} $\eta\in\mathcal{S}(\theta)$ encodes
subjective optimality (the dual LP \eqref{eq:lower-LP}), and the \emph{upper-level objective} $D(\theta\mid\pi_\eta)$ is the long-run
KL divergence defined in \eqref{eq:D-theta-pi-eta}.

\begin{proposition}[Bilevel Characterization of Berk--Nash Equilibria]
\label{prop:BN-BP}
Under Assumptions~\ref{ass:primitives}--\ref{ass:ergodicity}--\ref{ass:KL},
let $(\theta^\ast,\eta^\ast)$ solve the bilevel program \eqref{eq:BN-BP}, and
set $\pi^\ast := \pi_{\eta^\ast}$.  Then:
\begin{enumerate}
\item[(i)] $\pi^\ast \in BR(\theta^\ast)$ (subjective optimality), and
\item[(ii)] $\theta^\ast \in \Theta^\ast(\pi^\ast)$ (statistical consistency),
\end{enumerate}
so $(\pi^\ast,\theta^\ast)$ is an infinite-horizon Berk--Nash equilibrium.
Moreover, every solution to \eqref{eq:BN-BP} is a \emph{KL-minimizing} BN
equilibrium, in the sense that it minimizes $D(\theta\mid\pi)$ among all
equilibrium pairs $(\pi,\theta)$.
\end{proposition}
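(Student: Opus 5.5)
Part (i) should follow from the lower-level constraint alone, and the ``moreover'' clause from a feasibility argument. The real work is part (ii), and I expect it to be the main obstacle.

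\textbf{Step 1 (subjective optimality).} Since $(\theta^\ast,\eta^\ast)$ is feasible for \eqref{eq:BN-BP}, we have $\eta^\ast\in\mathcal{S}(\theta^\ast)$. That is, $\eta^\ast$ is an optimal solution of the dual LP \eqref{eq:lower-LP} with $Q=Q_{\theta^\ast}$. Lemma~\ref{lem:dual-LP-correct} then says that the normalization \eqref{eq:pi-from-eta} yields $\pi^\ast=\pi_{\eta^\ast}\in BR(\theta^\ast)$. One detail needs attention: states with $\sum_b\eta^\ast(x,b)=0$, where $\pi_{\eta^\ast}(\cdot\mid x)$ is arbitrary. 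I would fix $\mu_0$ with full support. The flow constraint then gives $\sum_a\eta(x,a)\ge\mu_0(x)>0$ for every $x$, so such states never occur and $\pi_\eta$ is uniquely defined.

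\textbf{Step 2 (every equilibrium is bilevel-feasible).} Let $(\pi,\theta)$ be any BN equilibrium, and let $\eta_\pi$ be the discounted occupation measure of $\pi$ under $Q_\theta$ started from $\mu_0$. By standard MDP theory $\eta_\pi$ satisfies the constraints of \eqref{eq:lower-LP}. Its objective value is $\sum_x\mu_0(x)V^\pi_\theta(x)=\sum_x\mu_0(x)V_\theta(x)$, because $\pi\in BR(\theta)$. Hence $\eta_\pi\in\mathcal{S}(\theta)$. Because $\mu_0$ has full support, $\pi_{\eta_\pi}=\pi$. So $(\theta,\eta_\pi)$ is feasible for \eqref{eq:BN-BP} with value $D(\theta\mid\pi)$, and optimality of $(\theta^\ast,\eta^\ast)$ gives $D(\theta^\ast\mid\pi^\ast)\le D(\theta\mid\pi)$. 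Once Step 3 is established, this yields the ``KL-minimizing'' claim. Nonemptiness of the comparison class comes from Theorem~\ref{thm:BN-existence}.

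\textbf{Step 3 (statistical consistency; main obstacle).} We need $D(\theta^\ast\mid\pi^\ast)\le D(\theta'\mid\pi^\ast)$ for all $\theta'\in\Theta$. The bilevel program only compares $(\theta^\ast,\eta^\ast)$ with pairs $(\theta',\eta')$ where $\eta'\in\mathcal{S}(\theta')$. The pair $(\theta',\eta^\ast)$ is generally \emph{not} feasible, since $\pi^\ast$ need not be a best response to $\theta'$.

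My first attempt would be by contradiction. Suppose $\theta'$ satisfies $D(\theta'\mid\pi^\ast)<D(\theta^\ast\mid\pi^\ast)$. Use convexity of $\Theta$ and of $D(\cdot\mid\pi^\ast)$ (Assumption~\ref{ass:KL}) to move along $\theta_t=(1-t)\theta^\ast+t\theta'$, and use upper hemicontinuity of $BR$ (Lemma~\ref{lem:BR}) to select $\eta_t\in\mathcal{S}(\theta_t)$ close to $\eta^\ast$. The aim is a feasible pair with strictly smaller value, but this requires $D(\theta_t\mid\pi_{\eta_t})$ to decrease along the path. That decrease is not implied by the assumptions: the switch of best response can raise the stationary-weighted KL.

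I therefore expect the argument to close only with an extra ingredient. Two candidates are: (a) $BR(\theta^\ast)\ni\pi^\ast$ locally constant (a nondegenerate LP optimum), in which case small perturbations keep $\pi^\ast$ feasible and local convexity gives global minimality; or (b) reinterpreting the upper level as minimizing $D(\theta\mid\pi_\eta)-\min_{\theta'}D(\theta'\mid\pi_\eta)$, whose zero set is exactly the equilibria by Theorem~\ref{thm:BN-existence}. I would try (a) first and state clearly whichever additional condition the proof uses.
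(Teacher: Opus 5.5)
\textbf{Verdict.} Your diagnosis of Step~3 is correct, and the obstacle cannot be removed. Under Assumptions~\ref{ass:primitives}--\ref{ass:KL}, part (ii) is false, and so is the claim that $(\pi^\ast,\theta^\ast)$ is a Berk--Nash equilibrium. The paper gives no argument to compare with: it states the proposition without proof, and the appendix covers only the lemmas, Theorem~\ref{thm:BN-existence} and Proposition~\ref{prop:BN-joint-equivalence}. The program \eqref{eq:BN-BP} minimizes $D(\theta\mid\pi)$ jointly over the graph of $BR$. Consistency instead requires minimality in $\theta$ with $\pi$ held fixed, and a joint minimizer over a graph need not be such a partial minimizer.

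\textbf{Counterexample.} Take $X=\{1,2\}$, $A=\{0,1\}$, $r(x,a)=\mathbf{1}\{x=1\}$ and $\Theta=[0,1]$. Let the transitions be state-independent: $P(1\mid x,a)=p_a$ and $Q_\theta(1\mid x,a)=q_a(\theta)$, with $p_0=\tfrac12$, $p_1=\tfrac14$, $q_0(\theta)=\tfrac{3-\theta}{4}$, $q_1(\theta)=\tfrac{1+2\theta}{4}$.
\begin{itemize}
\item All transition probabilities lie in $[\tfrac14,\tfrac34]$ and each $q_a$ is affine, so every assumption holds, including convexity of $D$ in $\theta$.
\item Since $V_\theta(1)-V_\theta(2)=1$, in both states $BR(\theta)$ selects $\arg\max_a q_a(\theta)$. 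That is action $0$ for $\theta<\tfrac23$, action $1$ for $\theta>\tfrac23$, and any mixture at $\bar\theta=\tfrac23$.
\item For any $\pi$, $D(\theta\mid\pi)=\bar\pi_0\,g_0(\theta)+\bar\pi_1\,g_1(\theta)$, where $\bar\pi_a=\sum_x\mu_\pi(x)\pi(a\mid x)$ and $g_a(\theta)=D_{\mathrm{KL}}(\mathrm{Bern}(p_a)\Vert\mathrm{Bern}(q_a(\theta)))$.
\item $g_0$ is strictly decreasing with unique minimizer $\theta=1$, and $g_1$ is strictly increasing. Moreover $g_0(\tfrac23)=\tfrac12\log\tfrac{36}{35}\approx 0.014$, while $g_1(\tfrac23)\approx 0.229$.
\end{itemize}
Now compare feasible pairs. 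Those with $\theta<\tfrac23$ have value $g_0(\theta)>g_0(\tfrac23)$. Those with $\theta>\tfrac23$ have value $g_1(\theta)>g_0(\tfrac23)$. At $\bar\theta$ the value exceeds $g_0(\tfrac23)$ unless $\bar\pi_1=0$. Both states have positive stationary mass, so $\bar\pi_1=0$ forces action $0$ everywhere.

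Hence the \emph{unique} solution of \eqref{eq:BN-BP} is $\theta^\ast=\tfrac23$ with $\pi^\ast$ playing action $0$ in both states. Yet $\Theta^\ast(\pi^\ast)=\arg\min g_0=\{1\}$, so (ii) fails. The only equilibria sit at $\bar\theta$ with $\bar\pi_1=s\in(0,1)$ solving $(1-s)g_0'(\tfrac23)+s\,g_1'(\tfrac23)=0$, and their KL value is strictly above the bilevel optimum.

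\textbf{Your Steps 1 and 2.} Both are sound. The full-support choice of $\mu_0$ is genuinely needed for (i). A state with $\sum_b\eta(x,b)=0$ still carries positive true stationary mass by Assumption~\ref{ass:ergodicity}, and an arbitrary completion of $\pi_\eta$ there need not be a best response.

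\textbf{Your repair (a).} This is a correct sufficient condition. If $\pi^\ast\in BR(\theta)$ for all $\theta$ in a relative neighborhood of $\theta^\ast$, the occupation measures of $\pi^\ast$ under those $Q_\theta$ are feasible. So $\theta^\ast$ is a local minimizer of $D(\cdot\mid\pi^\ast)$, and by convexity on the convex set $\Theta$ a global one. But the example shows this is not a technicality. A non-equilibrium bilevel solution must lie on the boundary of $\{\theta:\pi^\ast\in BR(\theta)\}$, and that is exactly where the optimization pushes it.

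\textbf{Your repair (b).} This correctly characterizes the equilibrium set: its optimal value is $0$ by Theorem~\ref{thm:BN-existence}. It does not by itself give the ``KL-minimizing'' clause. For that you must minimize $D(\theta\mid\pi_\eta)$ subject to both $\eta\in\mathcal{S}(\theta)$ and $\theta\in\Theta^\ast(\pi_\eta)$. In other words, consistency must enter as a constraint, as condition (iv) of Definition~\ref{def:BN-joint-program} does. That minimum is attained because the equilibrium set is nonempty and closed (both correspondences have closed graphs). Its solutions are then KL-minimal equilibria by construction.

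The right output here is the counterexample together with a corrected statement, not a proof of the proposition as written.
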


\section{Entropy-Regularized Berk--Nash Framework for Infinite-Horizon Markov Decision Problems}
\label{Sec:4}

We consider an infinite-horizon discounted Markov decision process with finite state space $X=\{1,\dots,S\}$, finite action space $A=\{1,\dots,m\}$, discount factor $\beta\in(0,1)$, reward function $r:X\times A\to\mathbb{R}$, and true transition kernel $P(\cdot\mid x,a)$.  The decision-maker entertains a parametric family of subjective kernels $\mathcal Q=\{Q_\theta(\cdot\mid x,a):\theta\in\Theta\}$, where $\Theta\subset\mathbb{R}^d$ is compact and the true kernel $P$ need not belong to $\mathcal Q$. Under parameter $\theta$, the agent believes that $x_{t+1}\sim Q_\theta(\cdot\mid x_t,a_t)$ and chooses a stationary policy $\pi(a\mid x)$. To ensure uniqueness and smoothness of the best response, we introduce entropy regularization with temperature $\lambda>0$. Given $\theta$, the agent solves
\begin{equation}
\label{eq:soft-mdp}
\max_{\pi}
\;\mathbb{E}^{\pi}_{Q_\theta}\!\left[
\sum_{t=0}^{\infty}\beta^t\bigl(r(x_t,a_t)-\lambda\log\pi(a_t\mid x_t)\bigr)
\right].
\end{equation}

For $v\in\mathbb{R}^X$, define the soft Bellman operator
\[
(\mathcal T_{\theta,\lambda}v)(x)
=
\lambda\log\!\left(
\sum_{a\in A}
\exp\!\left(
\frac{1}{\lambda}
\Bigl[
r(x,a)+\beta\sum_{x'\in X}Q_\theta(x'\mid x,a)v(x')
\Bigr]
\right)
\right).
\]

\begin{theorem}[Soft Bellman Fixed Point]
For each $\theta\in\Theta$ and $\lambda>0$, the operator $\mathcal T_{\theta,\lambda}$ is a contraction on $(\mathbb{R}^X,\|\cdot\|_\infty)$ with modulus $\beta$. Hence there exists a unique fixed point $v_{\theta,\lambda}$ satisfying $v_{\theta,\lambda}=\mathcal T_{\theta,\lambda}v_{\theta,\lambda}$.
\end{theorem}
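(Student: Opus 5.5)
The plan is to verify the two Blackwell-type sufficient conditions for the soft operator, \emph{monotonicity} and \emph{discounting}, and then conclude with the Banach fixed-point theorem on the complete space $(\mathbb{R}^X,\|\cdot\|_\infty)$. Throughout, write $q_v(x,a):=r(x,a)+\beta\sum_{x'}Q_\theta(x'\mid x,a)v(x')$ and $\Phi_\lambda(z):=\lambda\log\sum_{a}\exp(z_a/\lambda)$ for $z\in\mathbb{R}^A$. Then $(\mathcal T_{\theta,\lambda}v)(x)=\Phi_\lambda(q_v(x,\cdot))$. Finiteness of $A$ and boundedness of $r$ (Assumption~\ref{ass:primitives}) guarantee that $\mathcal T_{\theta,\lambda}$ maps $\mathbb{R}^X$ into itself.

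First I establish two elementary properties of the log-sum-exp map $\Phi_\lambda$. It is monotone: if $z\le z'$ componentwise then $\Phi_\lambda(z)\le\Phi_\lambda(z')$, since $\exp$ and $\log$ are increasing. It is also translation-equivariant: $\Phi_\lambda(z+c\mathbf 1)=\Phi_\lambda(z)+c$ for any constant $c\in\mathbb{R}$, which follows by factoring $e^{c/\lambda}$ out of the sum. Combining these gives the $1$-Lipschitz bound $|\Phi_\lambda(z)-\Phi_\lambda(z')|\le\max_a|z_a-z'_a|$. Indeed, with $c=\|z-z'\|_\infty$ we have $z'-c\mathbf 1\le z\le z'+c\mathbf 1$, and we apply $\Phi_\lambda$ to all three terms.

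Second, I bound the inner Bellman targets. For $v,w\in\mathbb{R}^X$ and every $(x,a)$,
\[
|q_v(x,a)-q_w(x,a)|=\beta\Bigl|\sum_{x'}Q_\theta(x'\mid x,a)\bigl(v(x')-w(x')\bigr)\Bigr|\le\beta\|v-w\|_\infty ,
\]
because $Q_\theta(\cdot\mid x,a)$ is a probability vector. Combining this with the $1$-Lipschitz property of $\Phi_\lambda$ and taking the maximum over $x$ yields $\|\mathcal T_{\theta,\lambda}v-\mathcal T_{\theta,\lambda}w\|_\infty\le\beta\|v-w\|_\infty$. This is the contraction with modulus $\beta<1$. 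Banach's theorem then gives existence and uniqueness of $v_{\theta,\lambda}$, and the iterates $\mathcal T^n_{\theta,\lambda}v_0$ converge to it geometrically from any starting point $v_0$.

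There is no real obstacle here; the only step requiring care is the Lipschitz property of log-sum-exp. An alternative derivation uses the mean value theorem: the gradient of $\Phi_\lambda$ is the softmax vector, which has nonnegative entries summing to one, so $|\Phi_\lambda(z)-\Phi_\lambda(z')|\le\|z-z'\|_\infty$. I would present the sandwich argument above because it avoids differentiability issues and mirrors the Blackwell conditions. Optionally, I would remark that the fixed point coincides with the optimal value of \eqref{eq:soft-mdp}, attained by the softmax policy $\pi(a\mid x)\propto\exp(q_{v_{\theta,\lambda}}(x,a)/\lambda)$. That identification is not needed for the stated theorem, however.
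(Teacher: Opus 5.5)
Your proof is correct. The paper states this theorem without a separate proof, but your route is the same sup-norm contraction plus Banach fixed-point argument it uses for the hard-max Bellman operator in the proof of Lemma~\ref{lem:BR}; your monotonicity and translation-equivariance argument for $\Phi_\lambda$, which makes it $1$-Lipschitz in $\|\cdot\|_\infty$, supplies the one detail specific to the soft operator.
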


The function $v_{\theta,\lambda}$ is the unique entropy-regularized subjective value function under $Q_\theta$. Define the soft action-value function 
$Q_{\theta,\lambda}(x,a)=r(x,a)+\beta\sum_{x'\in X}Q_\theta(x'\mid x,a)v_{\theta,\lambda}(x').$

\begin{theorem}[Softmax Best Response]
For each $\theta\in\Theta$ and $\lambda>0$, the unique entropy-regularized best response is
\(
\pi_{\theta,\lambda}(a\mid x)
=
\frac{\exp\!\left(\frac{1}{\lambda}Q_{\theta,\lambda}(x,a)\right)}
{\sum_{b\in A}\exp\!\left(\frac{1}{\lambda}Q_{\theta,\lambda}(x,b)\right)}.
\)
Moreover, the mapping $\theta\mapsto\pi_{\theta,\lambda}$ is continuous.
\end{theorem}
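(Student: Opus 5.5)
The plan has three parts: a per-state Gibbs variational identity, a soft policy-iteration argument for optimality and uniqueness, and continuity via the contraction in the Soft Bellman Fixed Point theorem.

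\textbf{Step 1: per-state variational identity.} For any vector $q\in\mathbb{R}^A$ and $\lambda>0$ we have
\[
\max_{p\in\Delta(A)}\Bigl\{\sum_a p(a)q(a)-\lambda\sum_a p(a)\log p(a)\Bigr\}=\lambda\log\sum_a e^{q(a)/\lambda}.
\]
The maximizer is unique and equals the softmax $p^\star(a)\propto e^{q(a)/\lambda}$. This follows because the objective equals $\lambda\log\sum_a e^{q(a)/\lambda}-\lambda D_{\mathrm{KL}}(p\Vert p^\star)$, and $D_{\mathrm{KL}}(p\Vert p^\star)$ vanishes iff $p=p^\star$. Applying this with $q=Q_{\theta,\lambda}(x,\cdot)$ and using $v_{\theta,\lambda}=\mathcal T_{\theta,\lambda}v_{\theta,\lambda}$ gives two facts. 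First, $v_{\theta,\lambda}(x)$ is the regularized one-step value at $x$, attained by $\pi_{\theta,\lambda}(\cdot\mid x)$. Second, for any other $\pi(\cdot\mid x)$,
\[
\sum_a\pi(a\mid x)\bigl[Q_{\theta,\lambda}(x,a)-\lambda\log\pi(a\mid x)\bigr]=v_{\theta,\lambda}(x)-\lambda D_{\mathrm{KL}}\bigl(\pi(\cdot\mid x)\Vert\pi_{\theta,\lambda}(\cdot\mid x)\bigr).
\]

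\textbf{Step 2: optimality and uniqueness among stationary policies.} For a stationary $\pi$, let $V^\pi_{\theta,\lambda}$ be the regularized value in \eqref{eq:soft-mdp}. It is the unique fixed point of the policy-evaluation operator
\[
(\mathcal T^\pi v)(x)=\sum_a\pi(a\mid x)\Bigl[r(x,a)-\lambda\log\pi(a\mid x)+\beta\sum_{x'}Q_\theta(x'\mid x,a)v(x')\Bigr],
\]
which is a $\beta$-contraction. The identity in Step 1 shows $\mathcal T^\pi v_{\theta,\lambda}=v_{\theta,\lambda}-\lambda\,\delta_\pi$, where $\delta_\pi(x):=D_{\mathrm{KL}}(\pi(\cdot\mid x)\Vert\pi_{\theta,\lambda}(\cdot\mid x))\ge0$. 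By monotonicity of $\mathcal T^\pi$, iterating gives
\[
V^\pi_{\theta,\lambda}=v_{\theta,\lambda}-\lambda\sum_t\beta^t(P^\pi_\theta)^t\delta_\pi\le v_{\theta,\lambda}.
\]
Equality holds at every $x$ iff $\delta_\pi\equiv0$, i.e.\ iff $\pi=\pi_{\theta,\lambda}$. Strictness at a single state propagates, since the $t=0$ term already gives $V^\pi_{\theta,\lambda}(x)<v_{\theta,\lambda}(x)$ wherever $\delta_\pi(x)>0$. Taking $\pi=\pi_{\theta,\lambda}$ shows that $v_{\theta,\lambda}$ is attained, so $\pi_{\theta,\lambda}$ is optimal and is the unique optimal stationary policy, simultaneously for every initial state. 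To cover history-dependent policies, I would note that the same one-step inequality $\mathbb{E}[\text{stage reward}+\beta v_{\theta,\lambda}(x_{t+1})\mid h_t]\le v_{\theta,\lambda}(x_t)$ telescopes along any history. This gives the bound for all policies; equality forces the softmax at every reachable history.

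\textbf{Step 3: continuity in $\theta$.} This is the step needing care. I will first show that $\theta\mapsto v_{\theta,\lambda}$ is continuous. Note that $(\theta,v)\mapsto\mathcal T_{\theta,\lambda}v$ is jointly continuous: log-sum-exp is continuous and $Q_\theta$ is continuous by Assumption~\ref{ass:primitives}(iv). The contraction modulus $\beta$ from the Soft Bellman Fixed Point theorem is uniform in $\theta$. Hence the standard parametric fixed-point estimate applies:
\[
\|v_{\theta,\lambda}-v_{\theta',\lambda}\|_\infty\le\frac{1}{1-\beta}\,\|\mathcal T_{\theta,\lambda}v_{\theta',\lambda}-\mathcal T_{\theta',\lambda}v_{\theta',\lambda}\|_\infty,
\]
and the right side tends to $0$ as $\theta\to\theta'$. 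Here I use that log-sum-exp is $1$-Lipschitz in sup-norm (taking $\lambda$ into account) and that $v_{\theta',\lambda}$ is fixed. Continuity of $Q_{\theta,\lambda}$ then follows, because it is a composition of continuous maps. Continuity of $\pi_{\theta,\lambda}$ follows in turn from continuity of the softmax map.
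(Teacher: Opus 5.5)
The paper states this theorem without proof (neither Section~4 nor the appendix argues it), so there is no proof of the paper's to compare against. Your argument is correct and fills that gap.

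Step~2's exact identity $V^\pi_{\theta,\lambda}=v_{\theta,\lambda}-\lambda\sum_{t\ge0}\beta^t(P^\pi_\theta)^t\delta_\pi$ is what actually justifies ``unique''. Note that it relies on $\mathcal T^\pi$ being affine; monotonicity alone would only give the inequality. You also pin down the right sense of uniqueness: optimality from every initial state. From a fixed initial distribution, a competitor could deviate from the softmax at states it never reaches. Likewise, a history-dependent optimal policy is determined only on histories of positive probability.

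Step~3 is a quantitative version of the ``standard contraction arguments'' the paper invokes for continuity of $V_\theta$ in the proof of Lemma~\ref{lem:BR}. There the authors must pass through an upper hemicontinuous argmax. Here, because the softmax is single-valued, continuity of the policy is immediate. You are right to import Assumption~\ref{ass:primitives}(iv) explicitly, since Section~4 only assumes $\Theta$ compact and continuity of $\theta\mapsto Q_\theta$ is indispensable.

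Your setup also yields the ``smooth'' claim the paper makes right after the theorem. When $\theta\mapsto Q_\theta$ is $C^1$, apply the implicit function theorem to $v-\mathcal T_{\theta,\lambda}v=0$. Its $v$-Jacobian is $I-\beta P^{\pi_{\theta,\lambda}}_\theta$, which is invertible. This makes $\theta\mapsto v_{\theta,\lambda}$, and hence $\theta\mapsto\pi_{\theta,\lambda}$, continuously differentiable.
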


Entropy regularization thus renders the best-response map single-valued and smooth. For any stationary policy $\pi$, let $\mu_\pi$ denote its unique stationary distribution under $P$. The long-run Kullback--Leibler divergence is
\(
D(\theta\mid\pi)
=
\sum_{x,a}
\mu_\pi(x)\pi(a\mid x)\,
D_{\mathrm{KL}}\!\left(P(\cdot\mid x,a)\Vert Q_\theta(\cdot\mid x,a)\right).
\)
Evaluating this at $\pi_{\theta,\lambda}$ yields the entropy-regularized Berk--Nash objective $J_\lambda(\theta)=D(\theta\mid\pi_{\theta,\lambda})$. Unlike the unregularized case, the smooth dependence $\theta\mapsto\pi_{\theta,\lambda}$ allows a single-level formulation:

\begin{definition}[Entropy-Regularized Berk--Nash Parameter]
For fixed $\lambda>0$, a parameter $\theta_\lambda^\ast$ solves
\(
\theta_\lambda^\ast\in\arg\min_{\theta\in\Theta}J_\lambda(\theta).
\)
The pair $(\theta_\lambda^\ast,\pi_{\theta_\lambda^\ast,\lambda})$ is called an entropy-regularized Berk--Nash solution.
\end{definition}

\medskip
\noindent

When $\Theta=\{\theta^1,\dots,\theta^K\}$ is finite and 
$J_\lambda(\theta)$ must be estimated online, model selection can be 
cast as an adversarial multi-armed bandit problem. Each parameter 
$\theta^k$ is an arm with loss $J_\lambda(\theta^k)$.

\begin{algorithm}[H]
\caption{EXP3 for Entropy-Regularized Berk--Nash Selection}
\label{alg:BN-EXP3}
\begin{algorithmic}[1]
\STATE \textbf{Input:} learning rate $\eta>0$, exploration $\gamma\in(0,1)$, horizon $T$
\STATE Initialize weights $w_1(k)=1$ for $k=1,\dots,K$
\FOR{$t=1,\dots,T$}
    \STATE $p_t(k)=(1-\gamma)\frac{w_t(k)}{\sum_{j}w_t(j)}+\frac{\gamma}{K}$
    \STATE Sample $I_t\sim p_t(\cdot)$ and set $\theta_t=\theta^{I_t}$
    \STATE Compute $v_{\theta_t,\lambda}$ and $\pi_{\theta_t,\lambda}$
    \STATE Run $\pi_{\theta_t,\lambda}$ and obtain loss estimate $\widehat J_t$
    \STATE $\widehat\ell_t(k)=
    \begin{cases}
    \widehat J_t/p_t(I_t), & k=I_t \\
    0, & \text{otherwise}
    \end{cases}$
    \STATE $w_{t+1}(k)=w_t(k)\exp(-\eta\widehat\ell_t(k))$
\ENDFOR
\STATE \textbf{Output:} empirical distribution $(p_t)$ or best parameter estimate
\end{algorithmic}
\end{algorithm}

Standard EXP3 analysis implies sublinear regret relative to the best fixed 
$\theta^k$ in hindsight, and therefore asymptotic concentration on 
KL-minimizing parameters.

\section{Numerical Illustration}
\label{sec:exp3-case-study}

We implement Algorithm~\ref{alg:BN-EXP3} on a 3-state MDP ($X=\{0,1,2\}$) with binary actions to validate its learning dynamics. The agent chooses between a conservative action ($a=0$) and an aggressive one ($a=1$, high variance). We assume the agent uses a misspecified family $\Theta=\{\theta^1,\dots,\theta^4\}$, constructed by mixing the true kernel $P$ with uniform noise levels $\varepsilon_k \in \{0.05, 0.15, 0.30, 0.45\}$. Thus, $\theta^1$ is the best approximation, while $\theta^4$ is effectively random noise. Parameters are set to $\beta=0.95$ and $\lambda=0.1$.

\vspace{-4mm}\subsubsection{Online Model Selection Dynamics}

We run the EXP3-based selection for $T=1500$ rounds. The results show that the algorithm effectively balances exploration and exploitation to identify the KL-minimizing model. As illustrated in Fig.~\ref{fig:exp3-frequency}, after a short warm-up phase the sampling distribution concentrates on the least misspecified model $\theta^1$, which is selected in approximately $82\%$ of rounds by the end of the horizon, while the more heavily misspecified models $\theta^3$ and $\theta^4$ are selected only $4.5\%$ and $3\%$ of the time, respectively, confirming that the weight update mechanism penalizes high-KL candidates. 

Figure~\ref{fig:exp3-loss} shows that the instantaneous loss $\widehat{J}_t$ stabilizes around $\approx 0.012$ when exploiting $\theta^1$, with occasional spikes (e.g., $\approx 0.079$ when selecting $\theta^3$) corresponding to deliberate exploration. The running average loss converges to $0.0122$, approaching the theoretical lower bound within the model class, thereby demonstrating sublinear regret relative to the best fixed parameter in hindsight.
\begin{figure}[t!]
    \centering
    \begin{minipage}{0.48\linewidth}
        \centering
        \includegraphics[width=\linewidth]{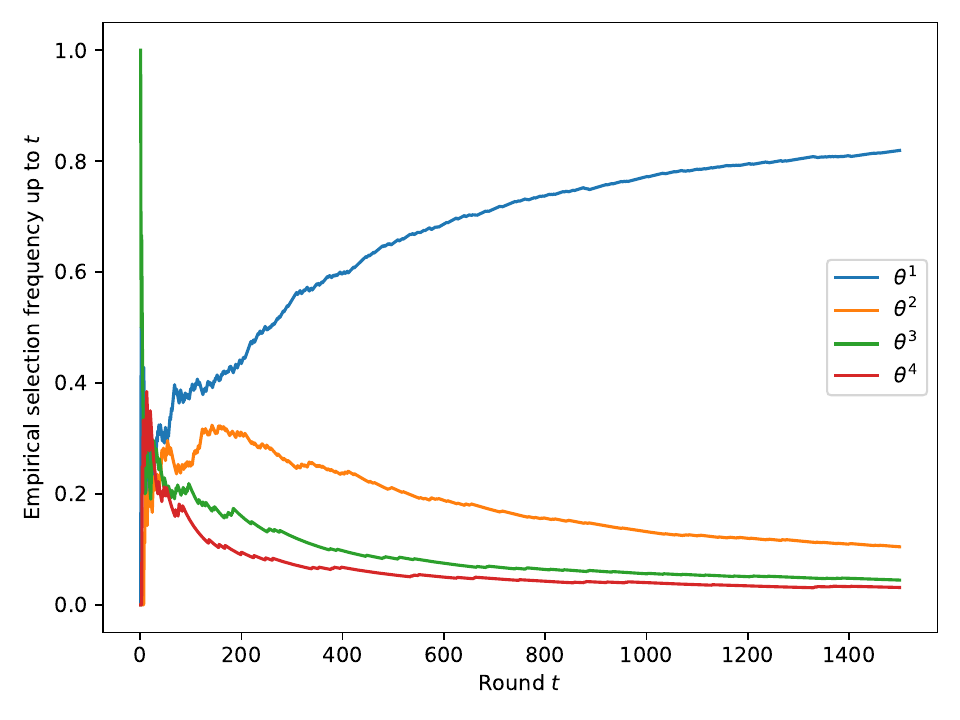}
        \vspace{-0.7cm} 
        \caption{Empirical selection frequencies. The algorithm concentrates 82\% of decisions on $\theta^1$.}
        \label{fig:exp3-frequency}
    \end{minipage}
    \hfill
    \begin{minipage}{0.48\linewidth}
        \centering
        \includegraphics[width=\linewidth]{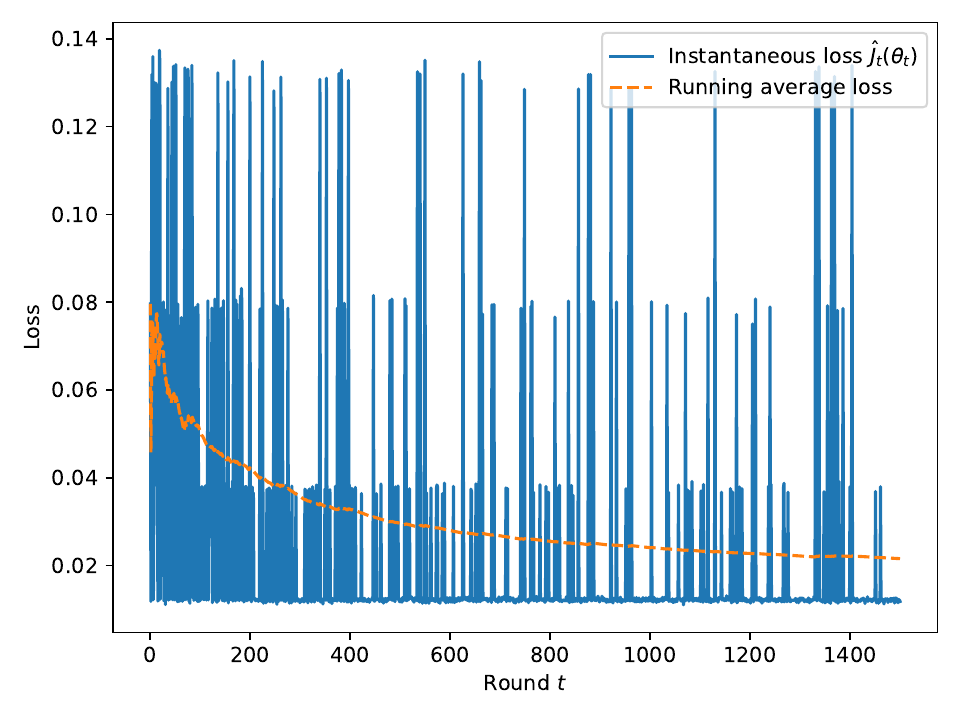}
        \vspace{-0.7cm}
        \caption{Instantaneous loss $\widehat{J}_t$ and running average. Convergence to BN loss (0.012) observed.}
        \label{fig:exp3-loss}
    \end{minipage}
    \vspace{-0.1cm} 
\end{figure}

\begin{figure}[t!]
    \centering
    \begin{minipage}{0.48\linewidth}
        \centering
        \includegraphics[width=\linewidth]{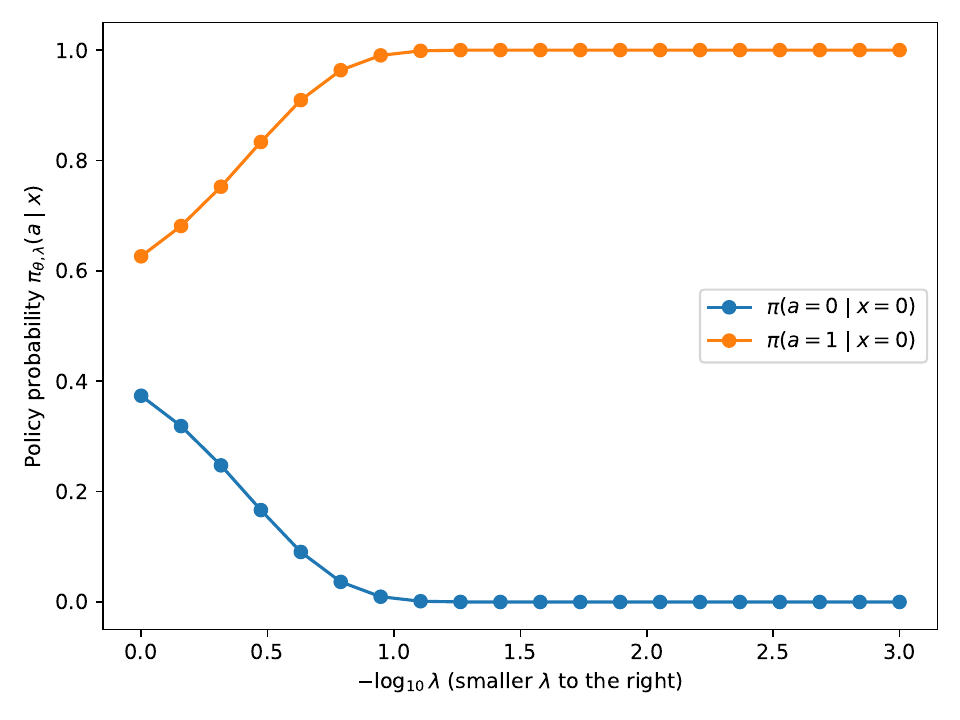}
        \vspace{-0.8cm}
        \caption{Policy $\pi_{\theta^1,\lambda}(\cdot|0)$ vs.\ $\log_{10}\lambda$. Transition from uniform to deterministic.}
        \label{fig:lambda-policy}
    \end{minipage}
    \hfill
    \begin{minipage}{0.48\linewidth}
        \centering
        \includegraphics[width=\linewidth]{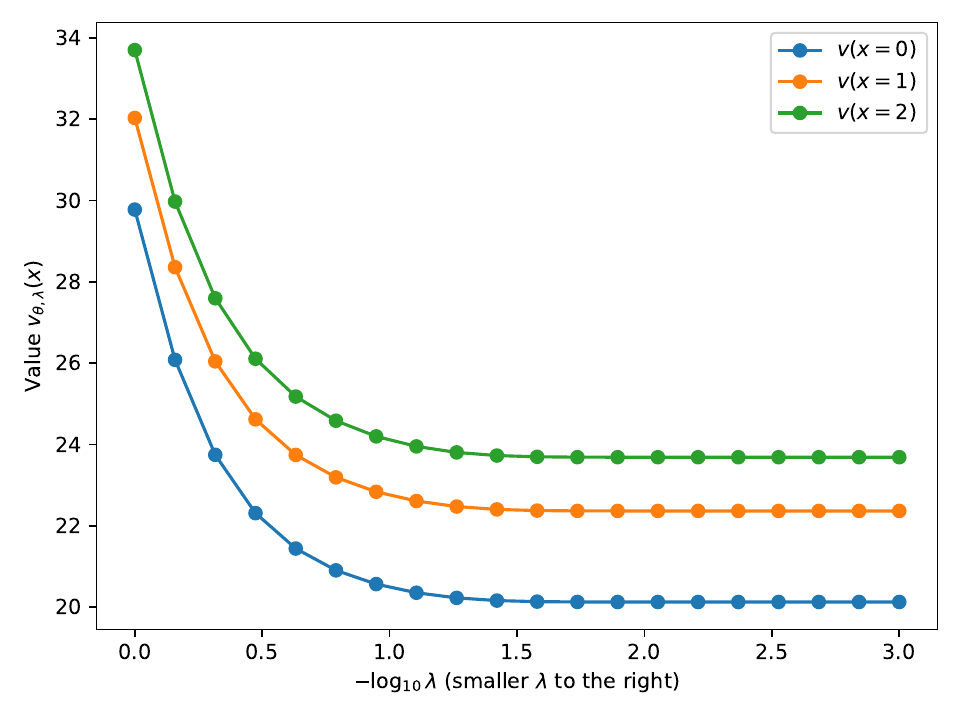}
        \vspace{-0.8cm} 
        \caption{Value $v_{\theta^1,\lambda}$ vs.\ $\log_{10}\lambda$. Values converge smoothly to the unregularized limit.}
        \label{fig:lambda-value}
    \end{minipage}
    \vspace{-0.2cm}
\end{figure}

\vspace{-4mm}\subsubsection{Sensitivity to Entropy Regularization}

We further examine how the temperature parameter $\lambda$ shapes the equilibrium associated with the best-fitting model $\theta^1$, highlighting its role as a homotopy parameter. As shown in Fig.~\ref{fig:lambda-policy}, the optimal policy probability at a reference state ($x=0$) varies smoothly with $\lambda$. For large $\lambda$, the entropy term dominates and induces a nearly uniform policy ($\pi \approx 0.5$), reflecting strong exploration. As $\lambda$ decreases toward zero, the policy becomes increasingly decisive, undergoing a sharp transition and ultimately converging to the deterministic best response of the unregularized problem. Correspondingly, Fig.~\ref{fig:lambda-value} shows that the regularized value functions are lower for large $\lambda$ due to the cost of enforced exploration. As $\lambda \to 0$, the values rise and stabilize, converging to the standard Berk--Nash value. These results confirm that the entropy-regularized formulation provides a smooth approximation to the original equilibrium and recovers it arbitrarily well in the zero-temperature limit, while retaining computational tractability.

\section{Conjecture Set Update}
\label{sec:conjecture-update}

Instead of a fixed discretization, we maintain a dynamic \emph{conjecture set} $\mathcal{C}_t \subset \Theta$ that supports \emph{zooming}. This allows the agent to start with a coarse cover and progressively refine resolution in high-probability regions while pruning suboptimal parameters.

\begin{figure}[t!]
    \centering
    \begin{minipage}{0.48\linewidth}
        \centering
        \includegraphics[width=\linewidth]{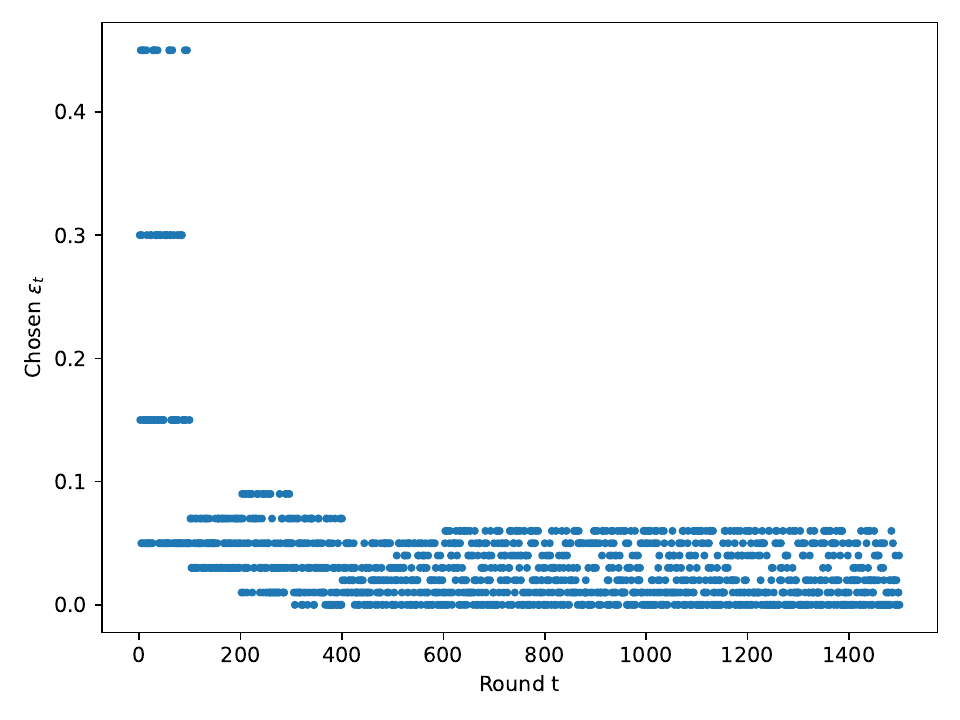}
        \vspace{-3em}
        \caption{Selected $\epsilon_t$. The algorithm rapidly concentrates on the true model region ($\epsilon \approx 0$).}
        \label{fig:zoom-theta-selection}
    \end{minipage}
    \hfill
    \begin{minipage}{0.48\linewidth}
        \centering
        \includegraphics[width=\linewidth]{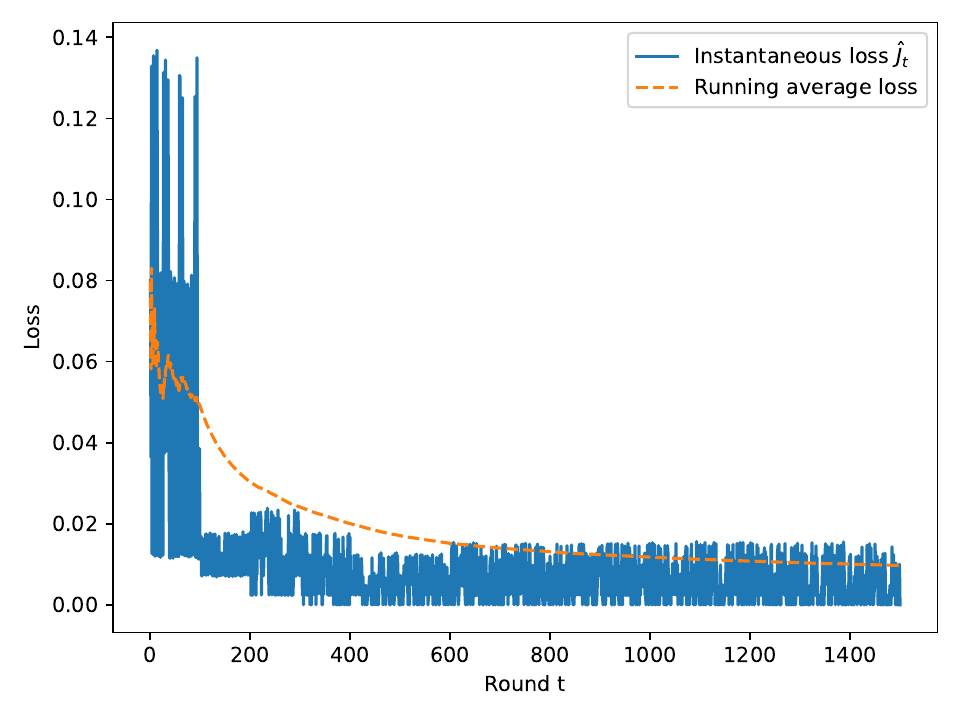}
        \vspace{-3em}
        \caption{Loss $\widehat{J}_t$. Spikes correspond to exploration; average loss converges to the minimum.}
        \label{fig:zoom-losses}
    \end{minipage}

\vspace{-2mm}\end{figure}

\vspace{-4mm} 
\subsubsection{Pruning and Refinement Operations}

At iteration $t$, for each $\theta \in \mathcal{C}_t$, we maintain a loss estimate $L_t(\theta)$ and an uncertainty indicator $U_t(\theta)$. The set evolves via two operations:

\begin{enumerate}
    \item[(a)] \textbf{Pruning:} We remove conjectures that are clearly suboptimal or sufficiently resolved. Given a threshold $\alpha_t$, we discard $\theta$ if:
    \begin{equation}
    \label{eq:prune-rule}
        L_t(\theta) > \min_{\theta' \in \mathcal{C}_t} L_t(\theta') + \alpha_t \quad \text{(suboptimal)} \quad \text{or} \quad U_t(\theta) < \delta_t \quad \text{(converged)}.
    \end{equation}

    \item[(b)] \textbf{Refinement:} We increase the resolution around promising candidates. Let $\mathcal{N}_t = \{ \theta \in \mathcal{C}_t : L_t(\theta) \le J_t^\star + \alpha_t, U_t(\theta) \ge \delta_t \}$ be the set of active, uncertain parameters. For each $\theta \in \mathcal{N}_t$, we add a local fine grid $\mathcal{R}_t(\theta) \subset B(\theta, \rho_t)$.
\end{enumerate}

The update rule is $\mathcal{C}_{t+1} = (\mathcal{C}_t \setminus \mathcal{C}_t^{\mathrm{prune}}) \cup \bigcup_{\theta \in \mathcal{N}_t} \mathcal{R}_t(\theta)$. This adaptive mechanism concentrates computational resources on the relevant subset of $\Theta$, allowing high precision without the curse of dimensionality.

\vspace{-4mm} 
\subsubsection{Numerical Illustration of Zooming}
\label{subsec:zoom-exp3-results}

We apply this adaptive EXP3 procedure to a scalar parameter $\epsilon \in [0, 0.5]$ representing model misspecification ($\epsilon=0$ is the true model; higher values imply more noise). 
Fig.~\ref{fig:zoom-theta-selection} tracks the selected $\epsilon_t$. Initially exploring a wide range, the algorithm rapidly concentrates on small values (near the true model). Occasional spikes to large $\epsilon$ reflect necessary exploration. Fig.~\ref{fig:zoom-losses} shows the corresponding loss. While exploratory pulls cause momentary spikes, the running average steadily converges to a low value, confirming that zooming effectively isolates the minimal-loss region. 

\section{Conclusion}\label{sec:conclusion}

In this work, we established a rigorous mathematical programming framework for the Infinite-Horizon Berk-Nash equilibrium, addressing the fundamental challenge of decision-making under model misspecification in complex interactive settings. By integrating linear programming characterizations with entropy regularization, we resolved the non-smoothness issues inherent in standard best-response maps, enabling the formulation of a tractable, unified objective. Our proposed online learning algorithms, specifically the EXP3-based selection augmented with adaptive conjecture set zooming, effectively bridge the gap between fixed-point theory and practical implementation, demonstrating robust convergence to KL-minimizing outcomes. Future research will extend this adaptive framework to multi-agent networked games, such as decentralized routing or distributed sensor networks, exploring how localized model misspecifications propagate through network topologies. This will further enhance the resilience of decentralized autonomous systems in unknown, highly interconnected environments.

\vspace{-4mm}

\appendix
\section{Appendix}

\subsection{Details of Algorithm ~\ref{alg:zoom-exp3}}

 \vspace{-6mm}

\noindent
\begin{algorithm}[H]
\caption{Adaptive EXP3 with Conjecture-Set Zooming}
\label{alg:zoom-exp3}
\begin{algorithmic}[1]
\STATE \textbf{Input:} initial conjecture set 
$\mathcal C_0=\{\theta_0^1,\dots,\theta_0^{K_0}\}\subseteq\Theta$,
learning rate $\eta>0$, exploration $\gamma\in(0,1)$,
horizon $T$, zoom interval $T_{\mathrm{zoom}}$,
threshold sequences $(\alpha_t)$, $(\delta_t)$,
refinement radii $(\rho_t)$

\STATE Initialize weights $w_1(k)=1$ and counters $N_0(k)=0$ for $k=1,\dots,K_0$
\STATE Set $\mathcal C_1=\mathcal C_0$, $K_1=K_0$

\FOR{$t=1,\dots,T$}

    \STATE $p_t(k)
    =(1-\gamma)\dfrac{w_t(k)}{\sum_{j=1}^{K_t} w_t(j)}
    +\dfrac{\gamma}{K_t}$

    \STATE Sample $I_t\sim p_t(\cdot)$ and set $\theta_t=\theta_t^{I_t}\in\mathcal C_t$

    \STATE Compute $v_{\theta_t,\lambda}$ and $\pi_{\theta_t,\lambda}$

    \STATE Run $\pi_{\theta_t,\lambda}$ and obtain loss estimate $\widehat J_t(\theta_t)$

    \STATE $N_t(I_t)=N_{t-1}(I_t)+1$

    \STATE $L_t(I_t)=\dfrac{N_{t-1}(I_t)L_{t-1}(I_t)+\widehat J_t(\theta_t)}{N_t(I_t)}$

    \STATE Define $U_t(k)=c\,(N_t(k)\vee 1)^{-1/2}$

    \STATE $\widehat \ell_t(k)=
    \begin{cases}
    \widehat J_t(\theta_t)/p_t(I_t), & k=I_t \\
    0, & \text{otherwise}
    \end{cases}$

    \STATE $w_{t+1}(k)=w_t(k)\exp(-\eta\,\widehat \ell_t(k))$

    \IF{$t \bmod T_{\mathrm{zoom}}=0$}

        \STATE $J_t^\star=\min_{1\le k\le K_t} L_t(k)$

        \STATE $\mathcal C_t^{\mathrm{prune}}=
        \{\theta_t^k:
        L_t(k)>J_t^\star+\alpha_t
        \ \text{or}\ 
        U_t(k)<\delta_t\}$

        \STATE $\mathcal N_t=
        \{\theta_t^k:
        L_t(k)\le J_t^\star+\alpha_t,
        \ 
        U_t(k)\ge\delta_t\}$

        \STATE For each $\theta\in\mathcal N_t$, generate
        $\mathcal R_t(\theta)\subseteq B(\theta,\rho_t)$

        \STATE $\mathcal C_{t+1}=
        (\mathcal C_t\setminus\mathcal C_t^{\mathrm{prune}})
        \cup
        \bigcup_{\theta\in\mathcal N_t}\mathcal R_t(\theta)$

        \STATE $K_{t+1}=|\mathcal C_{t+1}|$

    \ELSE

        \STATE $\mathcal C_{t+1}=\mathcal C_t$, \quad $K_{t+1}=K_t$

    \ENDIF

\ENDFOR

\STATE \textbf{Output:} empirical sampling distribution $(p_t)$ or final conjecture set $\mathcal C_T$
\end{algorithmic}
\end{algorithm}

\subsection{Proof of Lemma~\ref{lem:theta-star}}
\label{proof:lem-theta-star}
Fix $\pi\in\Sigma$. By Assumption~\ref{ass:KL}, the function
$\theta \mapsto D(\theta\mid\pi)$ is finite, continuous, and convex on
the compact convex set $\Theta$. Since $\Theta$ is compact and $D(\cdot\mid\pi)$ is continuous,
the Weierstrass theorem implies that the minimum is attained.
Hence
$\Theta^\ast(\pi)
=\arg\min_{\theta\in\Theta} D(\theta\mid\pi)$
is nonempty. Because it is the set of minimizers of a continuous
function over a compact set, it is closed; being a closed subset
of a compact set, it is therefore compact. Convexity of $D(\cdot\mid\pi)$ implies convex-valuedness of the
argmin set. Indeed, if $\theta_1,\theta_2\in\Theta^\ast(\pi)$ and
$\lambda\in[0,1]$, convexity yields
$D(\lambda\theta_1+(1-\lambda)\theta_2\mid\pi)
\le \lambda D(\theta_1\mid\pi)
+(1-\lambda)D(\theta_2\mid\pi)$.
Since both $\theta_1$ and $\theta_2$ attain the minimum,
the right-hand side equals $\min_{\theta\in\Theta} D(\theta\mid\pi)$,
so every convex combination is also a minimizer. Thus
$\Theta^\ast(\pi)$ is convex. Finally, Assumption~\ref{ass:KL} guarantees that
$D(\theta\mid\pi)$ is jointly continuous in $(\theta,\pi)$.
Because $\Theta$ is compact and does not depend on $\pi$,
Berge’s Maximum Theorem applies. It follows that the correspondence
$\pi\mapsto\Theta^\ast(\pi)$ is upper hemicontinuous with nonempty
compact values. Therefore, $\Theta^\ast(\pi)$ is nonempty, compact, convex-valued,
and the argmin correspondence is upper hemicontinuous.

\vspace{-3mm} \subsection{Proof of Lemma~\ref{lem:BR}}
\label{proof:lem-BR}

Fix $\theta\in\Theta$. Since $X$ and $A$ are finite and $\beta\in(0,1)$,
the Bellman operator $T_\theta:\mathbb{R}^X\to\mathbb{R}^X$ defined by
\[
(T_\theta V)(x)
=
\max_{a\in A}
\Bigl\{
r(x,a)
+
\beta \sum_{x'} Q_\theta(x'\mid x,a) V(x')
\Bigr\}
\]
is a contraction with modulus $\beta$ under the sup norm.
Hence, by the Banach fixed-point theorem, there exists a unique fixed point
$V_\theta$, which is the optimal value function of the subjective MDP
$\mathcal M_\theta$. For each $x\in X$, define the maximizing action set
\[
A^\ast_\theta(x)
=
\arg\max_{a\in A}
\Bigl\{
r(x,a)
+
\beta \sum_{x'} Q_\theta(x'\mid x,a) V_\theta(x')
\Bigr\}.
\]
Because $A$ is finite, the maximization is over a finite set,
so $A^\ast_\theta(x)$ is nonempty and finite.

Define the best-response correspondence
\(
BR(\theta)
=
\bigl\{
\pi\in\Sigma:
\pi(\cdot\mid x)\in\Delta(A^\ast_\theta(x))
\text{ for all } x\in X
\bigr\}.
\)
Since each $\Delta(A^\ast_\theta(x))$ is a simplex over a finite set,
it is nonempty, compact, and convex. Because $\Sigma$ is the product
of simplices over $x\in X$, it follows that $BR(\theta)$ is a
nonempty product of compact convex sets, hence compact and convex in $\Sigma$.

It remains to establish upper hemicontinuity.
The Bellman operator $T_\theta$ depends continuously on $\theta$
through the transition kernel $Q_\theta$. Because $T_\theta$
is a contraction uniformly in $\theta$, standard contraction arguments
imply that the fixed point $V_\theta$ depends continuously on $\theta$
under the sup norm. Consequently, for each $(x,a)$,
the function
\(
\theta
\mapsto
r(x,a)
+
\beta \sum_{x'} Q_\theta(x'\mid x,a) V_\theta(x')
\)
is continuous. Since $A$ is finite, the argmax correspondence
$A^\ast_\theta(x)$ is upper hemicontinuous in $\theta$ for each $x$.
Finally, $BR(\theta)$ is the product over $x\in X$
of the correspondences $\Delta(A^\ast_\theta(x))$,
and products of finitely many upper hemicontinuous correspondences
with compact values remain upper hemicontinuous.
Therefore, $BR(\theta)$ is upper hemicontinuous with
nonempty compact convex values.

\vspace{-2mm}
\subsection{Proof of Theorem~\ref{thm:BN-existence}}
\label{proof:thm-BN}
Consider the product space $K := \Sigma \times \Theta$.
Because $\Sigma$ is a product of finite-dimensional simplices
and $\Theta$ is compact and convex by assumption,
both sets are nonempty, compact, and convex.
Hence $K$ is nonempty, compact, and convex
under the product topology. Define the correspondence $F:K \rightrightarrows K$ by
$F(\pi,\theta) := BR(\theta) \times \Theta^\ast(\pi)$.
By Lemma~\ref{lem:theta-star},
$\Theta^\ast(\pi)$ is nonempty, compact, convex-valued,
and upper hemicontinuous in $\pi$.
By Lemma~\ref{lem:BR},
$BR(\theta)$ is nonempty, compact, convex-valued,
and upper hemicontinuous in $\theta$.

Because $BR$ depends only on $\theta$
and $\Theta^\ast$ depends only on $\pi$,
and both correspondences are upper hemicontinuous
with nonempty compact convex values,
their Cartesian product $F$ is upper hemicontinuous
with nonempty, compact, and convex values on $K$.
Since $K$ is compact and convex,
all the hypotheses of Kakutani’s fixed-point theorem are satisfied. Therefore, there exists $(\pi^\ast,\theta^\ast)\in K$
such that $(\pi^\ast,\theta^\ast)\in F(\pi^\ast,\theta^\ast)$.
By definition of $F$, this means
$\pi^\ast\in BR(\theta^\ast)$ and
$\theta^\ast\in \Theta^\ast(\pi^\ast)$.
Hence $(\pi^\ast,\theta^\ast)$ satisfies
the mutual best-response conditions
for an infinite-horizon Berk--Nash solution.

\vspace{-2mm}
\subsection{Proof of Proposition~\ref{prop:BN-joint-equivalence}}
\label{prof: BN-joint-equivalence}

\paragraph{$(1)\Rightarrow(2)$.}
Suppose $(\pi^\ast,\theta^\ast)$ is a Berk--Nash equilibrium.
Since the parameter set is finite, there exists $k^\ast$
such that $\theta^\ast=\theta^{k^\ast}$. Because $\pi^\ast\in BR(\theta^{k^\ast})$,
Lemma~\ref{lem:dual-LP-correct} implies that there exists
an optimal solution $\eta^\ast$ to the dual linear program
\eqref{eq:dual-LP-k} associated with $Q^{k^\ast}$ and $\pi^\ast$
such that $\eta^\ast$ satisfies the subjective flow constraints
\eqref{eq:BN-joint-subjective-flow}--\eqref{eq:BN-joint-subjective-nonneg}
and induces $\pi^\ast$ via
\eqref{eq:BN-joint-policy-from-eta}. Let $d^\ast$ denote the stationary state--action distribution
under the true kernel $P$ and policy $\pi^\ast$.
Then $d^\ast$ satisfies the true stationary flow constraints
\eqref{eq:BN-joint-true-norm}--\eqref{eq:BN-joint-true-nonneg}
and induces $\pi^\ast$ via
\eqref{eq:BN-joint-policy-from-d}. Finally, since $\theta^\ast\in\Theta^\ast(\pi^\ast)$,
the index $k^\ast$ satisfies the KL-minimality inequalities
\eqref{eq:BN-joint-KL-min}.
Therefore, $(k^\ast,\eta^\ast,d^\ast,\pi^\ast)$
satisfies all joint BN feasibility conditions.

\vspace{-2mm}\paragraph{$(2)\Rightarrow(1)$.}
Conversely, suppose $(k^\ast,\eta^\ast,d^\ast,\pi^\ast)$
is joint BN-feasible.
The subjective flow constraints
\eqref{eq:BN-joint-subjective-flow}--\eqref{eq:BN-joint-subjective-nonneg}
together with Lemma~\ref{lem:dual-LP-correct}
imply that $\pi^\ast\in BR(\theta^{k^\ast})$. The true stationary constraints
\eqref{eq:BN-joint-true-norm}--\eqref{eq:BN-joint-true-nonneg}
ensure that $d^\ast$ is the stationary state--action distribution
generated by $(P,\pi^\ast)$.
Hence the KL divergence satisfies
$D(\theta^k\mid\pi^\ast)=D_k(d^\ast)$ for every $k$.
The inequalities \eqref{eq:BN-joint-KL-min}
then imply that $\theta^{k^\ast}$ minimizes
$D(\theta^k\mid\pi^\ast)$ over $k$,
that is, $\theta^{k^\ast}\in\Theta^\ast(\pi^\ast)$.
 Thus $\pi^\ast\in BR(\theta^{k^\ast})$
and $\theta^{k^\ast}\in\Theta^\ast(\pi^\ast)$,
so $(\pi^\ast,\theta^{k^\ast})$
is a Berk--Nash equilibrium.

\vspace{-2mm}
\bibliographystyle{splncs04}  
\bibliography{refs}
\end{document}